\documentclass{article}

\usepackage{amsmath,amssymb,amsfonts, amsthm}
\usepackage{algorithmic}
\usepackage{graphicx}
\usepackage{textcomp}
\usepackage{xcolor}
\usepackage{bbm}
\usepackage{subcaption}
\captionsetup{compatibility=false}
\usepackage{mwe}
\usepackage{url}
\usepackage{multirow}
\usepackage{float}
\usepackage{eqnarray}
\usepackage{nicefrac}
\usepackage{dsfont}

\usepackage[round]{natbib}
\bibliographystyle{plainnat}

\newtheorem{theorem}{Theorem}[section]
\newtheorem{corollary}{Corollary}[theorem]
\newtheorem{lemma}[theorem]{Lemma}
\newtheorem{definition}{Definition}[section]
\newtheorem*{remark}{Remark}

\newcommand{\raisedchi}{\raisebox{\depth}{\(\chi\)}}

\newcommand{\ie}{{\em i.e.,} }
\newcommand{\eg}{{\em e.g.,} }
\newcommand{\half}{\nicefrac{1}{2}}

\newcommand\pf{{\operatorname{p-f}}}

\newcommand\numberthis{\addtocounter{equation}{1}\tag{\theequation}}

\allowdisplaybreaks

\begin{document}
	
\title{Differential Entropy Rate Characterisations of Long Range Dependent Processes}

\author{Andrew~Feutrill,
	and Matthew~Roughan}


\maketitle

\begin{abstract}
	 A quantity of interest to characterise continuous-valued stochastic processes is the differential entropy rate.
	The rate of convergence of many properties of long range dependent (LRD) processes is slower than might be expected, based on the intuition for conventional processes, \emph{e.g. Markov processes}. Is this also true of the entropy rate?
	
	In this paper we consider the properties of the differential entropy rate of stochastic processes that have an autocorrelation function that decays as a power law. We show that power law decaying processes with similar autocorrelation and spectral density functions, Fractional Gaussian Noise and ARFIMA(0,d,0), have different entropic properties, particularly for negatively correlated parameterisations. Then we provide an equivalence between the mutual information between past and future and the differential excess entropy for stationary Gaussian processes, showing the finiteness of this quantity is the boundary between long and short range dependence. Finally, we analyse the convergence of the conditional entropy to the differential entropy rate and show that for short range dependence that the rate of convergence is of the order $O(n^{-1})$, but it is slower for long range dependent processes and depends on the Hurst parameter.
\end{abstract}


\section{Introduction}

The entropy rate of discrete time stochastic
processes has been studied as a measure of the average
uncertainty. Most investigations of this type have focussed on
processes whose correlations decay quickly, and hence the dependence
on past observations disappears rapidly. However, many real processes
from a variety of contexts, \ie data
networks (\eg \cite{Leland:1993:SNE:166237.166255,
  willinger_self_similar_high_variability_1997, willinger1995}),
climate (\eg \cite{varotsos2006}), hydrology (\eg \cite{beran1994statistics,
  lawrence1977}), and economics (\eg \cite{willinger_stock_1999, cont2005}), have
been shown to exhibit long range dependence, meaning correlations
exist between past and future observations that cannot be ignored at
any time lag. 

Information and coding theory have had profoundly important uses in
signal processing and communication. Noise processes are an important
part of this story. However, in most works, for instance on designing
optimal codes on noisy channels, the noise processes are presumed to
be short range dependent. However, as far back as 1965, ~\cite{1089090}
showed that some noise processes are also long-range
dependent (though the terminology was still developing). 

Recent work has investigated an information theoretic characterisation
of long range and short range processes (e.g. \cite{Li_2004, Chavez_2016,
  ding2016entropic}), using the finiteness of mutual information
between past and future. We aim to clarify this characterisation and
investigate its implications.

This paper calculates the differential entropy rate for the two most
common stationary Gaussian Long Range Dependent (LRD) processes:
Fractional Gaussian Noise (FGN) and the Auto-Regressive
Fractionally-Integrated Moving Average (ARFIMA) process. We start by
deriving the entropy rate for these processes, and show that they both
have negative poles as the processes tend towards strong long-range
correlations, but that their behaviour when anti-correlated is
surprisingly different: FGN has a pole similar to that for positive
correlations, but ARFIMA does not. This contradicts common intuition
based on their similar spectral densities that FGN and ARFIMA(0,d,0) are close to equivalent.

We also investigate the links between the two information measures:
excess entropy and the mutual information between past and future
processes, and compare these to the differential entropy rate. We show
that the differential entropy rate definition for excess entropy is
equivalent to the mutual information between past and future for
continuous valued discrete time Gaussian processes, and hence that
excess entropy is infinite for all long range dependent Gaussian
processes.

Finally, estimators, such as the sample mean, applied to LRD processes
have been shown to have slow convergence rates, which can lead to a
larger than expected uncertainty when investigating these
processes. We ask, \textit{``Does this behaviour apply to estimators
  of entropy?"}  and \textit{``What is the impact of the degree of
  positive or negative correlations on the entropy rate?"}  We show
that while the convergence rate of the conditional entropy of short
range dependent Gaussian processes is in the order of $n^{-1}$, the
rate of convergence for LRD Gaussian processes is slower. Although
this parallels many of the other results for LRD processes, the actual
rate of convergence is different at $O\Big((1-2H)^2 \log(n)/n \Big)$, where $H$ is
the Hurst parameter. As $H \rightarrow 1$ we can see that this
convergence is at its worst, though $\mathcal{H}$ only appears as a factor not in the exponent.

\clearpage

\section{Background}

\subsection{Long range dependence}

LRD refers to a process where correlations decay slower over time such
that the future is non-trivially dependent on the past no matter how
far forward we proceed. A sample path of an LRD fractional Gaussian noise
process, with $H=0.8$, is shown in Figure~\ref{fig: fgn_sample_path}.
As is typical for LRD processes, there is the
appearance of long periods of upwards and downwards ``trends", even
though the process is stationary.

LRD can be defined in two equivalent ways, via the autocorrelation or
spectral density. These definitions are related through the Fourier
Transform~\cite[pg. 117]{brockwell_davis_1986}, and are equivalent via
the Kolmogorov Isomorphism Theorem (\cite{bingham2012}). The following
statement defines the concept of long range dependence in terms of its
autocorrelation function. For reference, the autocovariance function,
$\gamma(k) = \mathds{E}[\left(X_{n+k} - \mu\right)\left(X_n - \mu\right)]$ for a process with mean $\mu$, and autocorrelation coefficient for $k$ lags, $\rho(k)$, are related by
$\rho(k) = \gamma(k)/ \gamma(0) = \gamma(k)/\sigma^2$, where
$\sigma^2$ is the variance of the stochastic process. Note we are using the autocorrelation coefficient and the the autocorrelation function, $\mathds{E}[X_{n+k}X_n]$, this is consistent with many works in the analysis of LRD processes.

\begin{definition}
	Let $\{X_n\}_{n \in \mathbb{N}}$ be a stationary process. If
        there exists $\alpha \in (0,1)$, and $c_\gamma >
        0$, such that the auto-covariance $\gamma(k)$ satisfies
	\begin{align*}
	\lim\limits_{k \rightarrow \infty} \frac{\gamma(k)}{c_\gamma k ^{-\alpha}} = 1,
	\end{align*}
	then we say that the process is long range dependent.
        \label{def:lrd_alpha}
\end{definition}

\noindent The equivalent definition in the frequency domain considers
the limit of the spectral density near the origin.

\begin{definition}
	Let $\{X_n\}_{n \in \mathbb{N}}$ be a stationary process. If
        there exists $\beta \in (0,1)$, and $c_f > 0$,
        such that the spectral density $f(\lambda)$ satisfies
	\begin{align*}
	\lim\limits_{\lambda \rightarrow 0} \frac{f(\lambda)}{c_f |\lambda| ^{-\beta}} = 1,
	\end{align*}
	then we say that the process is long range dependent.
\end{definition}

The concept of LRD is often characterised by the Hurst parameter, $H$, which measures the strength of the correlations between the past and present of a stochastic process. It is
perhaps unfortunate that $H$ is used as standard notation both for the
Hurst parameter, and for Shannon entropy, but we shall side-step that
issue here as we are mainly concerned with entropy rates, which we
will designate with lower-case $h$.

\cite{hurst1951} developed the parameter when he was measuring flows in
the Nile River.  There is a
relationship between $H$ and the $\alpha$ in Definition~\ref{def:lrd_alpha},
namely
\[ H = 1 - \alpha/2. \]
LRD processes have $\alpha \in (0,1)$ and hence have $H \in (0.5,
1)$. The parameter
takes on values between 0 and 1, with $H > \half$ representing a
positively correlated process and $H < \half$ representing a
negatively correlated process and $H = \half$ being a short range
correlated process, such as white Gaussian noise. 

\begin{figure}[t]
	\centering
	\includegraphics[width=\linewidth]{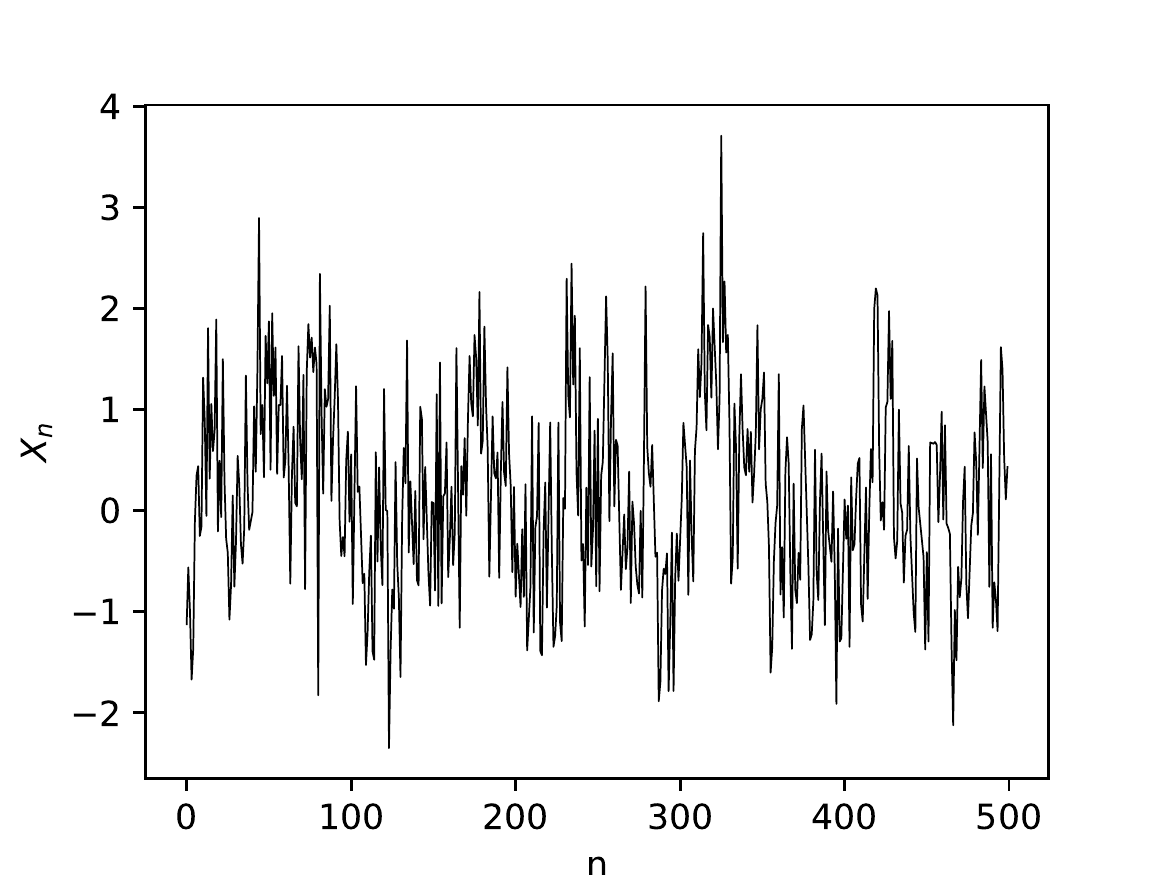}
	\caption{Sample path of Fractional Gaussian Noise with Hurst parameter, $H=0.8$. We can see that the high correlations lead to the appearance of longer trends than we would expect for an independent and identically distributed process.}
	\label{fig: fgn_sample_path}
\end{figure}

Another property of note, that has been used as the definition of LRD
processes, is the (un)summability of the autocorrelation function. For
a LRD processes the sum of the autocovariances diverges, \ie
$\sum_{k=1}^{\infty} \gamma(k) \rightarrow \infty$, whereas for SRD
processes this is finite (\cite{beran1994statistics}). Then we can
interpret LRD as having such strong correlations that the
autocovariance values decay such that the distant past still
influences the future.  The intuition for non-LRD processes is
that the autocovariance function decays exponentially, or quicker, and
in many cases this means, for analysis, we can ignore correlations beyond some short
lag. 

The negatively correlated processes, $H < \half$, have not received as
much consideration at the SRD and LRD cases but we include them in the
analysis here.  These are similar to short range dependent processes,
such as white Gaussian noise, as they still have a summable
autocorrelation function (see~\cite{beran1994statistics,
  gefferth2003nature}). In fact their structure enforces that
$\sum_{k=1}^{\infty} \gamma(k) = 0$. This is quite a strict and
surprising property, and hence~\cite{gefferth2003nature} called these processes
constrained short range dependent (CSRD).

We will be working with Gaussian processes in this paper.
\begin{definition}
	A stochastic process is a Gaussian process if and only if each
        finite collection of random variables from the process has a
        multivariate Gaussian distribution.
\end{definition}
This definition applies to both discrete and continuous time
processes, though here we are principally interested in the
former. Gaussian processes are completely characterised by their
second order statistics (the mean and autocovariance
function) (\cite{box2015time}), which makes them the primary type of
stochastic model used in this context.



\subsection{Entropy rate}

As we are considering continuous random variables in this paper, we
will be considering differential entropy, which is a continuous
extension of Shannon entropy for discrete random variables. In this
paper, we will be using the natural logarithm in all of the
definitions, and hence the units of entropy that we will be working
with are nats. We include standard definitions in order that all
notation be precisely defined.

\begin{definition}
	The differential entropy, $h(X)$, of a continuous random variable with probability density function, $f(x)$, is defined as,
	\begin{align*}
	h(X) = - \int_{\Omega} f(x) \log f(x) \, dx,
	\end{align*}
	where $\Omega$ is the support of the random variable.
\end{definition}

Differential entropy has some important properties which are different
from Shannon entropy. For example, differential entropy can be
negative, or even diverge to $-\infty$, which we can see by
considering the Dirac delta function, $\delta(x)$. The delta, \ie the
unit impulse, is defined by the properties $\delta(x) =0$ for $x \neq
0$ and $\int_{-\infty}^{\infty} \delta(x) \, dx = 1$. The Dirac delta
can be thought of in terms of probability as a completely determined
point in time, that is, a function possessing no uncertainty. It can
be constructed as the limit of rectangular pulses of constant area 1
as their width decreases, and hence we can calculate the entropy of
the delta as
\begin{align*}
	h(X) &= -\int_{-a}^{a} \frac{1}{2a} \log\left(\frac{1}{2a}\right) dx,\\
	     &= \log(2a),
\end{align*}
which tends to $-\infty$ as  $a \rightarrow 0$.

The intuition for $h(X) = -\infty$ from~\cite{cover_thomas_2006} is that the number of bits (note we are working in nats) on
average required to fix a random variable, $X$ to n-bit accuracy
is $h(X) + n$. Meaning $h(X) = -\infty$, can be read as requiring
$n-\infty$ bits, or that we can describe the random variable
arbitrarily accurately without using any bits.
   
We will see the same type of asymptotic behaviour for LRD processes as
$H \rightarrow 1$. Effectively, in the limit the correlations in the
process straight-jacket it, such that the future is completely
determined by the past, and so the incremental uncertainty in the
process is the same as that of the delta.


 

The differential entropy can be extended into the multivariate case
and hence to stochastic processes using the joint entropy for a
collection of random variables.

\begin{definition}
	The joint differential entropy of a collection of random variables $X_1, X_2, ..., X_n$, with density $f(x_1, x_2, ... , x_n) = f(\mathbf{x})$ is defined as,
	\begin{align*}
	h(X_1, X_2, ... , X_n) = - \int_{\Omega}f(\mathbf{x}) \log f(\mathbf{x}) \, d\mathbf{x},
	\end{align*}
	where $\Omega = \Omega_1 \times \Omega_2 \times ... \times \Omega_n$ is the support of the random variables.
\end{definition}

Similarly, the conditional differential entropy can be defined for a random variable, given knowledge of other variables.

\begin{definition}
	The conditional differential entropy of a random variable, $X_n$, given a collection of random variables $X_1, X_2, ... , X_{n-1}$, with a joint density $f(x_1, x_2, ... , x_n) = f(\mathbf{x})$, is defined as,
	\begin{align*}
	h(X_n | X_{n-1}, ... , X_1) = - \int_{\Omega} f(\mathbf{x}) \log f(x_n | x_{n-1}, ... ,x_1) d\mathbf{x},
	\end{align*}
	where $\Omega = \Omega_1 \times \Omega_2 \times ... \times \Omega_n$ is the support of the random variables.
\end{definition}

Finally, we define the concept of differential entropy rate, which can
be thought of as the average amount of new information from each
sample of a random variable in a discrete time process.

\begin{definition}
	Where the limit exists, the differential entropy rate of a
        stochastic process $\raisedchi = \{X_i\}_{i \in \mathbb{N}}$
        is defined to be, 
	\begin{align*}
	h(\raisedchi) &= \lim\limits_{n \rightarrow \infty} \frac{h(X_1, ... , X_n)}{n}.
	\end{align*}
\end{definition}
An example of a process which is non-stationary but has a differential
entropy rate is the Gaussian walk, $S_n$. This is defined as the
process of sums of i.i.d. normally distributed random variables, \ie
$S_n = \sum_{i=1}^{n} X_i$, where $X_i \sim \mathcal{N}(0, \sigma^2)$.
The process has mean 0 for all $n$, however it is non-stationary as
the variance depends on $n$ as,
\begin{align*}
\mbox{Var}(S_n) = \mbox{Var}\left(\sum_{i=1}^{n} X_i\right) = n\sigma^2.
\end{align*}
However, the entropy rate converges and is equal to 
\begin{align*}
\lim\limits_{n \rightarrow \infty} \frac{h(S_1, ... , S_n)}{n} &=  \lim\limits_{n \rightarrow \infty} \frac{\sum_{i=1}^{n} h(S_i | S_{i-1}, \ldots, S_1)}{n},\\
&= \lim\limits_{n \rightarrow \infty}\frac{nh(X_i)}{n}, \\
 &= h(X_i),
\end{align*} 
as each random variable $X_i$ is independent.

An alternative characterisation of the differential entropy rate is given by the following theorem for stationary processes. This was developed for the Shannon entropy of discrete processes~\cite[Theorem 4.2.1]{cover_thomas_2006}, however this has been extended to differential entropy~\cite[pg. 416]{cover_thomas_2006}.

\begin{theorem}[Theorem 4.2.1~{\cite{cover_thomas_2006}}]\label{conditional_entropy_rate}
	For a stationary stochastic process, $\raisedchi = \{X_i\}_{i \in \mathbb{N}}$, the entropy rate is equal to,
	\begin{align*}
	h(\raisedchi) &= \lim\limits_{n \rightarrow \infty} h(X_n | X_{n-1}, ... , X_1).
	\end{align*}
\end{theorem}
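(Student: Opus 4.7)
The plan is to adapt the classical Cover--Thomas argument for the discrete Shannon entropy rate, checking that each step still goes through for differential entropy. The three ingredients needed are: (i) the chain rule $h(X_1,\ldots,X_n)=\sum_{i=1}^{n}h(X_i\mid X_{i-1},\ldots,X_1)$, which is valid for differential entropy provided the joint densities exist; (ii) ``conditioning does not increase entropy'', which also carries over to the differential case since it is a consequence of the non-negativity of mutual information; and (iii) the Cesàro lemma stating that if $a_n\to a$ then the running average $\frac{1}{n}\sum_{i=1}^{n}a_i\to a$.

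First I would show that the sequence $a_n := h(X_n\mid X_{n-1},\ldots,X_1)$ is monotonically non-increasing. By (ii), conditioning on the additional variable $X_1$ cannot increase entropy, so
\begin{align*}
h(X_{n+1}\mid X_n,\ldots,X_1)\;\leq\;h(X_{n+1}\mid X_n,\ldots,X_2).
\end{align*}
Stationarity of $\raisedchi$ implies that the joint density of $(X_2,\ldots,X_{n+1})$ equals that of $(X_1,\ldots,X_n)$, so the right-hand side above equals $h(X_n\mid X_{n-1},\ldots,X_1)=a_n$. Hence $a_{n+1}\leq a_n$, and the limit $a:=\lim_{n\to\infty}a_n\in[-\infty,a_1]$ exists in the extended reals (this is precisely the hypothesis ``where the limit exists'' implicitly in the entropy rate definition).

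Next I would invoke the chain rule (i) to write
\begin{align*}
\frac{h(X_1,\ldots,X_n)}{n}\;=\;\frac{1}{n}\sum_{i=1}^{n}h(X_i\mid X_{i-1},\ldots,X_1)\;=\;\frac{1}{n}\sum_{i=1}^{n}a_i,
\end{align*}
and then apply the Cesàro mean lemma (iii) to conclude that this average tends to the same limit $a$ as the sequence $a_n$ itself. This identifies $\lim_n h(X_1,\ldots,X_n)/n$ with $\lim_n h(X_n\mid X_{n-1},\ldots,X_1)$, which is the claim.

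The main obstacle, and the reason this is not literally identical to the discrete case, is handling the possibility $a=-\infty$: the standard Cesàro statement is usually phrased for real-valued sequences. However, the lemma extends to limits in $[-\infty,\infty]$ by a straightforward truncation argument (or by noting that if $a_n\to-\infty$ then for every $M$ eventually $a_n<-M$, forcing the averages to go to $-\infty$ as well). The discussion of $h=-\infty$ earlier in the paper indicates this case is genuinely relevant for highly correlated LRD limits, so it must be covered. A small additional point is to verify that the chain rule holds; this follows formally from the factorisation $f(x_1,\ldots,x_n)=\prod_i f(x_i\mid x_{i-1},\ldots,x_1)$ whenever all the conditional densities exist, which is automatic for the non-degenerate Gaussian processes considered in the paper.
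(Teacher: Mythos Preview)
Your proof is correct and is precisely the standard Cover--Thomas argument (Theorem~4.2.1 in their book), adapted to differential entropy with the appropriate caveats about the $-\infty$ case. Note, however, that the paper does not actually give its own proof of this statement: it simply cites the result from \cite{cover_thomas_2006}, so there is no in-paper proof to compare against.
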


\noindent The second equivalent definition is useful because it will
allow us to analyse the convergence rates of conditional entropy to
the differential entropy rate, which is important in estimation of
entropy rates.

\section{Entropy rate function for Fractional Gaussian Noise}\label{entropy_fgn}

We want to understand the effect of memory on the entropic properties
of a stochastic process. We start with the entropy rate
characterisation for Gaussian processes originally derived by
Kolmogorov (\cite[pg. 76]{ihara1993information})
\begin{equation}
  \label{eqn:ihara}
  h(\raisedchi)
  = \frac{1}{2} \log(2\pi e) + \frac{1}{4 \pi} \int_{-\pi}^{\pi} \log (2\pi f(\lambda)) \, d\lambda,
\end{equation}
where $f(\lambda)$ is the spectral density, \ie  the Fourier transform
of the autocovariance function for a mean zero process.

We'll begin by investigating the spectral density of Fractional
Gaussian Noise (FGN), which is given by~\cite{beran1994statistics} as
\begin{equation}
  \label{eqn:fgn_spectral_density}
  f(\lambda) = 2 c_f (1 - \cos\lambda) \sum_{j=-\infty}^{\infty} |2 \pi j + \lambda |^{-2H -1},
\end{equation}
where, $c_f = \frac{\sigma^2}{2 \pi} \sin(\pi H) \Gamma(2H + 1)$, $H$
is the Hurst parameter, and $\sigma^2$ is the variance of the process.

This spectral density is difficult to analyse as it has an infinite sum of absolute values. In particular, when we apply the entropy rate characterisation~\ref{eqn:ihara}, as it involves taking a logarithm of a sum, making analytical calculation prohibitively difficult. However, we can still use this expression to derive some properties of the entropy rate of FGN processes.

\subsection{Comparison of approximate and analytical spectral density for entropy rate calculation}

\noindent Substituting the spectral density of FGN
(\ref{eqn:fgn_spectral_density}) into the second term in the entropy
rate expression (\ref{eqn:ihara}) we get
\begin{align*}
  \int_{-\pi}^{\pi} &\log\big( 2 \pi f(\lambda) \big) \, d \lambda \\
  &= 2 \pi \log(4 \pi c_f) + \int_{-\pi}^{\pi} \log( 1-\cos \lambda ) d \lambda  + \int_{-\pi}^{\pi} \log\left(\sum_{j=-\infty}^{\infty} |2 \pi j + \lambda|^{-2H-1}\right) d\lambda,\\
  &=2 \pi \log(4 \pi c_f) - 2 \pi \log 2 + \int_{-\pi}^{\pi} \log\left(\sum_{j=-\infty}^{\infty} |2 \pi j + \lambda|^{-2H-1}\right) d\lambda.
\end{align*}


\noindent The last term is finite for all $H \in (0,1)$, since the singularity that exists when $\lambda = j = 0$ in the absolute value term is integrable. This is
important as we can then see that this does not affect the asymptotic
behaviour of FGN processes. The resulting entropy rate is
\begin{align*}
h(\raisedchi) =  \frac{1}{2} &\log(2\pi e) + \frac{1}{2}\log\left(2\sigma^2\sin(\pi H) \Gamma(2 H + 1)\right)\\
 &+ \frac{1}{4\pi} \int_{-\pi}^{\pi} \log\left(\sum_{j=-\infty}^{\infty} |2 \pi j + \lambda|^{-2H - 1}\right) d \lambda.
\end{align*}


We calculate $h(\raisedchi)$ using numerical integration via Python's Scipy library (\cite{SciPy-NMeth2020}). We plot the differential entropy rate of Fractional Gaussian Noise as a function of the Hurst parameter, H, in Figure~\ref{fig: fgn_entropy_rate}. The plot shows the impact of the variance on entropy rate calculation, and hence that the entropy rate of Fractional Gaussian Noise has a large dependence on the variance, \emph{i.e.} the second order properties. Each unit increase in variance has a smaller effect on the value of the differential entropy, due to the $\log(\sigma^2)$ term. 

The spectral density expression is quite cumbersome to work with and an approximation is often used, which is accurate at low frequencies~\cite[pg. 53]{beran1994statistics}. It is derived from a Taylor series expansion of the spectral density and is given by,
\begin{align*}
	f(\lambda) \approx c_f |\lambda|^{1-2H}.
\end{align*} 

\noindent We can obtain a closed form for the entropy rate if we substitute this
approximation into the integral in the entropy
rate expression (\ref{eqn:ihara}) to get
\begin{align*}
	\int_{-\pi}^{\pi} \log (2\pi f(\lambda)) \, d\lambda
	 &=\int_{-\pi}^{\pi} \log\left(2\pi c_f\right) d\lambda +  \int_{-\pi}^{\pi} \log\left(|\lambda|^{1-2H}\right) \, d\lambda, \\
	&= 2\pi \log\left(2\pi c_f\right) + 2\left(1 -2H\right)\int_{0}^{\pi} \log\left(\lambda\right) \, d\lambda,\\
	&= 2\pi \log\left(2\pi c_f\right) + 2\left(1 -2H\right)\left(\pi \log \pi - \pi\right).
\end{align*}

\noindent Note that there is a singularity at the origin of the
spectral density of LRD processes. However, the integral is still well
defined and finite in this case.  Therefore the entropy rate
approximation is,
\begin{align*}
  \tilde{h}(\raisedchi) =  \frac{1}{2}\log(2\pi e) 
  + \frac{1}{2}\log 2 \pi c_f
  + (1-2H)(\log\pi - 1),
\end{align*}
which differs from the exact formulation only in the last term.  

\begin{figure}
	\centering
	\includegraphics[width=\linewidth]{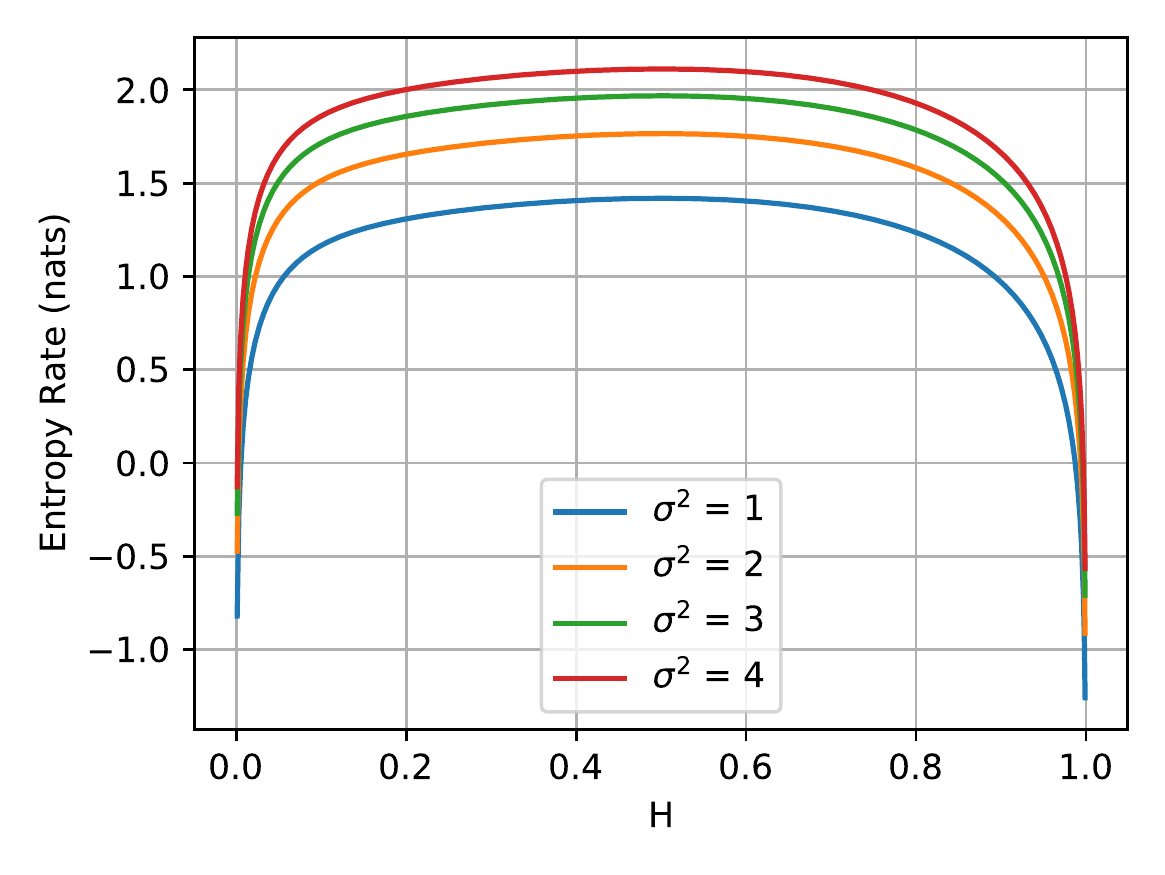}
	\caption{Entropy rate of Fractional Gaussian Noise as a function of the Hurst Parameter. The maximum is at $H=0.5$, where the process is white Gaussian noise. As $H \rightarrow 0$ or $1$, the function tends towards $-\infty$, as the strength of the negative or positive correlations increase. The impact of changing variance decreases as the variance increase, due to the $\log(\sigma^2)$ term.}
	\label{fig: fgn_entropy_rate}
\end{figure}

\begin{figure}
	\centering
	\includegraphics[width=\linewidth]{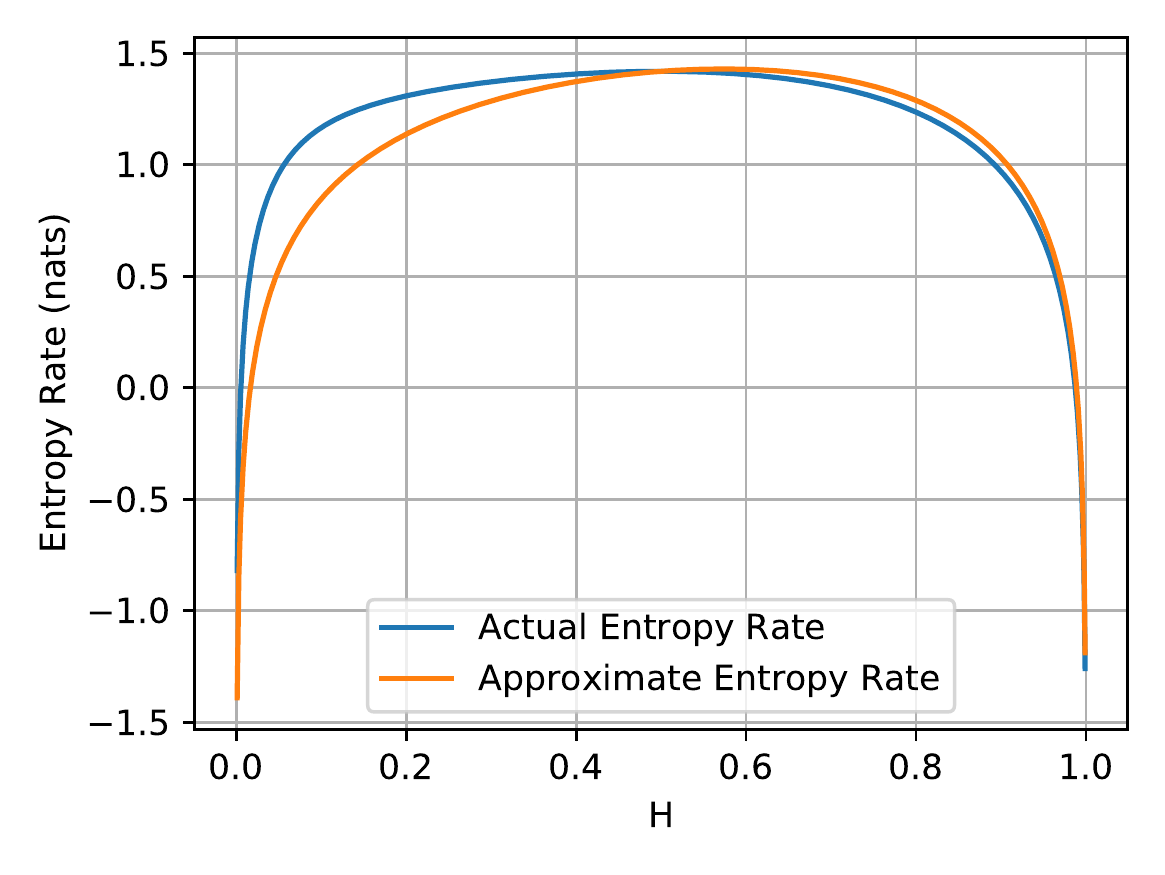}
	\caption{Comparison of the numerically integrated spectral
          density and the spectral density approximation. The
          approximation is relatively good for $H \geq \half$ but an
        underestimate for $H \leq \half$.}
	\label{fig: fgn_entropy_rate_comparison}
\end{figure}


Figure~\ref{fig: fgn_entropy_rate_comparison} shows the
entropy rate and its approximation. We can see that the entropy rate
approximation is very good for the positively correlated cases $H \geq
0.5$ and at the limits around $H=0$ or $1$. However for moderately,
negatively-correlated processes the approximation is a noticeable
underestimate of the entropy rate.

\subsection{Properties of Entropy rate for Fractional Gaussian Noise}

Figure~\ref{fig: fgn_entropy_rate_comparison} shows some interesting
properties
\begin{itemize}
	\item The entropy rate as a function of $H$ is not symmetric. Negatively
          correlated processes seem to have higher uncertainty the
          same distance from $H=0.5$. 
	\item The entropy rate asymptotically tends to $-\infty$ as $H
          \rightarrow$ 0 or 1. 
	\item The maximum entropy rate occurs at 0.5. Indicating that
          the maximum entropy occurs for white Gaussian noise.  
\end{itemize}
We explain how these properties emerge below.

\subsubsection{Asymptotic behaviour}


\begin{theorem}
	The approximate differential entropy rate of Fractional
        Gaussian Noise, $\tilde{h}(\raisedchi) \rightarrow -\infty$ as $H
        \rightarrow 0$ or $1$.
\end{theorem}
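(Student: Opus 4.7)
The plan is to unpack the closed form for $\tilde{h}(\raisedchi)$, substitute the expression for $c_f$, and analyse each summand separately as $H$ approaches the two endpoints. Recall that
\begin{align*}
\tilde{h}(\raisedchi) = \tfrac{1}{2}\log(2\pi e) + \tfrac{1}{2}\log(2\pi c_f) + (1-2H)(\log\pi - 1),
\end{align*}
with $c_f = \tfrac{\sigma^2}{2\pi}\sin(\pi H)\Gamma(2H+1)$. Substituting and simplifying, the $2\pi$ factors cancel and we obtain
\begin{align*}
\tilde{h}(\raisedchi) = \tfrac{1}{2}\log(2\pi e) + \tfrac{1}{2}\log\bigl(\sigma^2 \sin(\pi H)\,\Gamma(2H+1)\bigr) + (1-2H)(\log\pi - 1).
\end{align*}

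The first step is then to isolate the only potentially divergent piece. The constant $\tfrac{1}{2}\log(2\pi e)$ and the affine term $(1-2H)(\log\pi - 1)$ are bounded continuous functions of $H$ on $[0,1]$, so they contribute a finite limit at either endpoint. Inside the middle logarithm, $\sigma^2$ is a positive constant, and $\Gamma(2H+1)$ is continuous and strictly positive on $[0,1]$ (taking values $\Gamma(1)=1$ at $H=0$ and $\Gamma(3)=2$ at $H=1$), hence $\log\Gamma(2H+1)$ stays bounded. The entire divergence therefore reduces to the behaviour of $\tfrac{1}{2}\log\sin(\pi H)$.

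The second step is the endpoint analysis of $\sin(\pi H)$. As $H \to 0^+$, we have $\sin(\pi H) \sim \pi H \to 0^+$, so $\log\sin(\pi H)\to -\infty$; symmetrically, as $H \to 1^-$, $\sin(\pi H) = \sin(\pi(1-H)) \sim \pi(1-H) \to 0^+$, giving the same divergence. Combining with the bounded terms identified in the previous step yields $\tilde{h}(\raisedchi) \to -\infty$ at both endpoints.

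There is no real obstacle here: the argument is essentially a continuity/limit calculation, and the only care needed is to confirm that the $\sin(\pi H)$ factor is the unique source of unboundedness, so that the divergence is not cancelled by some other $H$-dependent term. It is worth remarking in the write-up that this matches the intuition given earlier in the paper, namely that as $H\to 1$ the process becomes deterministic (analogous to the Dirac delta example), while as $H\to 0$ the increasingly strong anti-correlations similarly collapse the conditional uncertainty; both regimes manifest as the single factor $\log\sin(\pi H)$ diverging to $-\infty$.
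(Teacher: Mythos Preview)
Your proposal is correct and follows essentially the same approach as the paper: identify that the gamma factor and the affine term remain bounded on $[0,1]$, so the divergence is driven entirely by $\log c_f$ via $\sin(\pi H)\to 0$ at either endpoint. The paper's proof is just a terser version of exactly this argument.
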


 \begin{proof}
 	 When $H \rightarrow 0$ or $1$, the term $c_f \rightarrow 0$,
         as the gamma function terms are finite, however the
         trigonometric terms tend to 0 as $H$ tends to an integer
         value. Hence, asymptotically the approximate entropy rate
         expression is dominated by $\log c_f \rightarrow -\infty$, as
         $c_f \rightarrow 0$.
 \end{proof}

 \begin{remark}
   Note that the approximation works well in the limits $\mathcal{H} \rightarrow
   0$ or 1, and so the theorem describes the asymptotic behaviour of
   entropy rate well. Moreover, the theorem lines up with the
   intuition for an LRD process. As we move closer to either perfectly
   positively or negatively correlated, the process becomes ``less
   uncertain", \emph{i.e.}, we have less entropy on average. When the
   uncertainty disappears, by viewing the entire past we can
   accurately infer the current value. It's important to reiterate
   that the differential entropy can be $-\infty$, which can be
   interpreted as least uncertainty for a process.
\end{remark}

\subsubsection{Maximum}

We want to understand the maximum of differential entropy rate, as a
function of the Hurst parameter. This will provide an understanding of
which parameter choices represent the highest uncertainty. We
differentiate the entropy rate, with respect to $H$ and then solve for
$H$ when the derivative equals zero. Here we need to apply this to the
exact formula because the approximation distorts the location of the
maximum. Therefore, dropping constant terms, we get
\begin{align*}
  \frac{dh}{dH}
  &=  \frac{1}{2} \frac{d}{dH} \log\left(\sigma^2\sin(\pi H) \Gamma(2 H + 1)\right)\\
   & \qquad+  \frac{1}{4 \pi} \frac{d}{dH} \int_{-\pi}^{\pi} \log\left(\sum_{j=-\infty}^{\infty} |2 \pi j + \lambda|^{-2H - 1}\right) d \lambda  \\
  &=  \frac{1}{2} \frac{d}{dH} \log\big(\sin(\pi H)\big) +  \frac{1}{2} \frac{d}{dH} \log\big(\Gamma(2 H + 1)\big)\\
  &\qquad - \frac{1}{2 \pi} \int_{-\pi}^{\pi} \frac{\sum_{j} \log(|2 \pi j + \lambda|) |2 \pi j + \lambda|^{-2H -1}}{\sum_{j} |2 \pi j + \lambda|^{-2H -1}} d\lambda   \\
  &= \frac{\pi}{2} \cot(\pi H) +  \psi(2 H + 1) - \frac{1}{2 \pi} \int_{-\pi}^{\pi} \frac{\sum_{j} \log(|2 \pi j + \lambda|) |2 \pi j + \lambda|^{-2H -1}}{\sum_{j} |2 \pi j + \lambda|^{-2H -1}} d\lambda.
\end{align*}


\noindent where $\psi(z) = \Gamma^\prime(z)/\Gamma(z)$ is the digamma
function.

Then we set this expression to zero, and solve for $\mathcal{H}$. This is a
transcendental equation with no closed form. We solve it numerically
using Python's SciPy package (\cite{SciPy-NMeth2020}), which yields $\mathcal{H}
\approx 0.500$. Therefore we conclude that the maximum entropy rate, using the exact spectral density, is at $\mathcal{H}=0.5$, which aligns with the idea that a SRD process has more uncertainty than any equivalent LRD
process.

Note that from the solution of the spectral density approximation is
$H \approx 0.516$. So although using the spectral density
approximation is acceptable for many purposes, it can lead to false
conclusions about the properties of the differential entropy rate.


\section{Entropy rate function for ARFIMA(\MakeLowercase{p},\textbf{d,q)}}

We consider the differential entropy rate function of a related process to Fractional Gaussian Noise, which is ARFIMA(p,d,q), the fractional extension of the ARIMA (Autoregressive Integrated Moving Average) processes, by extending to non-integer differencing parameters, $d$ (see~\cite{hosking1981, granger1980}). FGN and ARFIMA(0,d,0) are commonly used stationary LRD processes for modelling real phenomena, and in particular FGN and ARFIMA(0,d,0) have very similar properties in the time and frequency domains. Additionally, these processes have been linked by by limit of their autocorrelation coefficients, $\rho(k) := \gamma(k)/\sigma^2$, under aggregation and rescaling (see~\cite{gefferth2003nature}). However, ARFIMA processes do differ from FGN in that you could change the fixed point, \emph{i.e.}, alter the eventual limit under aggregation and rescaling, with the addition of additive noise (\cite{veitch2013farima}), \emph{i.e.}, this class is less robust to the addition of noise. Hence, there may be some differences in behaviour when looking through an entropic lens.

Before we define an ARFIMA(p,d,q) process, we define two polynomials that are required for the ARFIMA definition. These are polynomials of the lag operator, $L$, where $LX_n = X_{n-1}$, defined by
\begin{align*}
	\phi(x) &= 1 - \sum_{j=1}^{p} \phi_j x^j, \text{for coefficients }\phi_j \text{ and } p \in \mathbb{Z^+},\\
	\psi(x) &= 1 + \sum_{j=1}^{q} \psi_j x^j, \text{for coefficients }\psi_j \text{ and } q \in \mathbb{Z^+}.
\end{align*}

\noindent Now we can define an ARFIMA(p,d,q) process,

\begin{definition}
	For a stationary stochastic process $\{X_i\}_{i \in \mathbb{Z}}$, such that
	\begin{align*}
		\phi(L)(1-L)^dX_i &= \psi(L)\epsilon_i ,
	\end{align*}
	for some $-\half < d < \half$ and $\epsilon_i$, a zero mean normally distributed random variable with variance $\sigma_\epsilon^2$. $\{X_i\}_{i \in \mathbb{Z}}$ is called a ARFIMA$(p,d,q)$ process. 
\end{definition}

There is a connection between the differencing parameter, $d$, and the Hurst parameter, $H$, for these processes

An ARFIMA(0,d,0) process is a special case of an ARFIMA(p,d,q), where $\phi(x) = \psi(x) = 1$, that is there is no lag on the noise, $\epsilon$, and all the auto-regressive lags on the previous values come from the differencing operator.

The spectral density of an ARFIMA(p,d,q) is given by~\cite{beran1994statistics} as
\begin{align}
	f(\lambda) &= \frac{\sigma_\epsilon^2|\psi(e^{i\lambda})|^2}{2\pi|\phi(e^{i\lambda})|^2}|1 - e^{i\lambda}|^{-2d}.\label{eqn:arfima_spec_dens}
\end{align}

\noindent The following theorem from \cite{hosking1981} (and in~\cite{beran1994statistics}) gives infinite autoregressive and moving average representations for ARFIMA(0,d,0) processes. 

\begin{theorem}[Proposition 2.2~{\cite{beran1994statistics}}]
	Let $X_n$ be a fractional ARIMA(0,d,0) process with -$\frac{1}{2} < d < \frac{1}{2}$ . Then\\
	(i) the following infinite autoregressive representation holds:\\
	\begin{align*}
		\sum_{k=0}^{\infty} \pi_k X_{n-k} = \epsilon_n,
	\end{align*} 
	where $\epsilon_n (n = 1, 2, ...)$ are independent identically distributed random variables and\\
	\begin{align*}
		\pi_k = \frac{\Gamma(k-d)}{\Gamma(k+1)\Gamma(-d)}.
	\end{align*}
	For $k \rightarrow \infty$ we have,
	\begin{align*}
		\pi_k \sim \frac{1}{\Gamma(-d)}k^{-d-1}.
	\end{align*}
	(ii) The following infinite moving average representation holds:\\
	\begin{align*}
		X_n = \sum_{k=0}^{\infty} a_k \epsilon_{n-k}
	\end{align*} 
	where $\epsilon_n (n = 1, 2, ...)$ are independent identically distributed random variables and\\
	\begin{align*}
		a_k = \frac{\Gamma(k+d)}{\Gamma(k+1)\Gamma(d)}.
	\end{align*}
	For $k \rightarrow \infty$ we have
	\begin{align*}
		a_k \sim \frac{1}{\Gamma(d)}k^{d-1}.
	\end{align*}
\end{theorem}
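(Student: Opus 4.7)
The plan is to derive both representations from the defining equation $(1-L)^d X_n = \epsilon_n$ by treating $(1-L)^d$ as a formal power series in the lag operator $L$ and then identifying the coefficients. For part (i), I would apply the generalized binomial theorem, valid for non-integer exponents, to write
\[
(1-L)^d = \sum_{k=0}^{\infty} \binom{d}{k}(-L)^k = \sum_{k=0}^{\infty} (-1)^k \binom{d}{k} L^k,
\]
so that $\pi_k = (-1)^k \binom{d}{k}$. Using the identity $(-1)^k \binom{d}{k} = \binom{k-d-1}{k}$ and rewriting this in terms of Gamma functions via $\binom{a}{b} = \Gamma(a+1)/(\Gamma(b+1)\Gamma(a-b+1))$, together with the reflection formula $\Gamma(-d)\Gamma(d+1) \sin(\pi d)/\pi = -1/d$ implicitly, I would reduce $\pi_k$ to the claimed form $\Gamma(k-d)/(\Gamma(k+1)\Gamma(-d))$.

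For part (ii), I would formally invert the operator: $X_n = (1-L)^{-d}\epsilon_n$, which corresponds to replacing $d$ by $-d$ in the expansion above. This immediately yields $a_k = (-1)^k \binom{-d}{k} = \Gamma(k+d)/(\Gamma(k+1)\Gamma(d))$. For both asymptotic statements I would apply the standard consequence of Stirling's formula,
\[
\frac{\Gamma(k+a)}{\Gamma(k+b)} \sim k^{a-b} \quad \text{as } k \to \infty,
\]
with $(a,b) = (-d, 1)$ for the $\pi_k$ asymptotic and $(a,b) = (d,1)$ for the $a_k$ asymptotic, picking up the constants $1/\Gamma(-d)$ and $1/\Gamma(d)$ respectively.

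The main obstacle is not the algebraic identification of the coefficients but the justification of convergence: the formal inversion and the interchange of summation with the action of the lag operator must be legitimate in an $L^2$ sense for the process. From the asymptotics, $\pi_k \sim k^{-d-1}$ and $a_k \sim k^{d-1}$, and square summability requires $\sum \pi_k^2 < \infty$ and $\sum a_k^2 < \infty$, which translate to $d > -\tfrac{1}{2}$ and $d < \tfrac{1}{2}$ respectively. These are precisely the bounds in the hypothesis, which is reassuring but also tight: the decay of the coefficients is only polynomial, not exponential as in the classical ARMA case, so one cannot rely on absolute summability. I would therefore appeal to the spectral density representation (\ref{eqn:arfima_spec_dens}) with $\phi = \psi = 1$, verifying that the filter transfer functions $\sum \pi_k e^{-ik\lambda}$ and $\sum a_k e^{-ik\lambda}$ correspond to $(1-e^{-i\lambda})^d$ and $(1-e^{-i\lambda})^{-d}$ in $L^2([-\pi,\pi])$, and use this to rigorously define the ARFIMA process as a mean-square limit. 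The Gamma-function identities are then the easy part; the delicate point is this $L^2$ justification tied to the stationarity/invertibility range of $d$.
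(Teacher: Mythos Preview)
The paper does not prove this statement at all: it is simply quoted as Proposition~2.2 from \cite{beran1994statistics} and used as an input to later results. Your sketch is the standard derivation (binomial expansion of $(1-L)^{\pm d}$, Gamma-function rewriting of the coefficients, Stirling for the asymptotics, and the observation that square summability of $\pi_k$ and $a_k$ is exactly what forces $-\tfrac12<d<\tfrac12$), and it is correct; there is nothing in the paper's own text to compare it against.
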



We will express an entropy rate characterisation for ARMA processes in terms of its innovation process variance, from \cite{ihara1993information}, and show that this can be extended to ARFIMA(0,d,0) and ARFIMA(p,d,q) processes. Then we will use the result to characterise the entropy rate of an ARFIMA(0,d,0) process in terms of its process variance.

\begin{theorem}[Proposition 2.2~{\cite{ihara1993information}}]\label{thm:ihara}
	The entropy rate of an ARMA(p,q) process is given by,
	$h(\raisedchi) = \frac{1}{2}\log(2\pi e \sigma_\epsilon^2)$.
\end{theorem}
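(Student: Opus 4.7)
The plan is to combine Kolmogorov's spectral formula for the Gaussian entropy rate, namely equation~(\ref{eqn:ihara}), with the ARMA spectral density in~(\ref{eqn:arfima_spec_dens}) (specialised to $d=0$), and then show that the $|\psi|^2/|\phi|^2$ contribution to the integral vanishes. Substituting the ARMA spectral density into~(\ref{eqn:ihara}) gives
\begin{align*}
h(\raisedchi) &= \tfrac{1}{2}\log(2\pi e) + \tfrac{1}{4\pi}\int_{-\pi}^{\pi} \log\!\left(\sigma_\epsilon^2\,\tfrac{|\psi(e^{i\lambda})|^2}{|\phi(e^{i\lambda})|^2}\right) d\lambda \\
&= \tfrac{1}{2}\log(2\pi e \sigma_\epsilon^2) + \tfrac{1}{4\pi}\int_{-\pi}^{\pi} \log |\psi(e^{i\lambda})|^2\, d\lambda - \tfrac{1}{4\pi}\int_{-\pi}^{\pi} \log |\phi(e^{i\lambda})|^2\, d\lambda,
\end{align*}
so everything reduces to showing that each of the remaining integrals is zero.

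Next I would invoke the stationarity and invertibility assumptions that are implicit in the statement, i.e.\ the roots of $\phi$ and $\psi$ all lie strictly outside the unit disk. Factor $\phi(z) = \prod_{k=1}^{p}(1 - z/r_k)$ with $|r_k| > 1$, so that $|\phi(e^{i\lambda})|^2 = \prod_k |1 - e^{i\lambda}/r_k|^2$, and likewise for $\psi$. The integral then splits over the factors, and for each root one computes
\begin{align*}
\frac{1}{2\pi}\int_{-\pi}^{\pi} \log|1 - e^{i\lambda}/r_k|\, d\lambda
= \frac{1}{2\pi}\int_{-\pi}^{\pi} \log|r_k - e^{i\lambda}|\, d\lambda - \log|r_k| = 0,
\end{align*}
where the last equality is Jensen's formula applied to the analytic function $z \mapsto r_k - z$ on the unit disk (using $|r_k|>1$ so the function is non-vanishing there). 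Summing over the factors of $\phi$ and of $\psi$ kills both remaining integrals, leaving exactly $h(\raisedchi) = \tfrac{1}{2}\log(2\pi e\sigma_\epsilon^2)$.

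The main obstacle, conceptually, is the Jensen step: one has to be careful that $\phi$ and $\psi$ have no roots on the unit circle (otherwise $\log|\phi(e^{i\lambda})|$ is not integrable in the naive sense), and that the integral of $\log|\cdot|$ over the unit circle really does vanish. Both points are handled by the standard ARMA assumptions of causal stationarity and invertibility, which guarantee $|r_k|>1$ for every root. A minor bookkeeping point is to pair the $\sigma_\epsilon^2/(2\pi)$ inside the logarithm with the $2\pi$ multiplier from the Kolmogorov formula to produce the clean $\log(2\pi e \sigma_\epsilon^2)$, but this is purely algebraic. No new ideas beyond Kolmogorov's formula plus Jensen are needed.
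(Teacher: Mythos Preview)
Your proposal is correct. Note, however, that the paper does not actually supply its own proof of this theorem: it is stated as Proposition~2.2 of \cite{ihara1993information} and invoked as a known result. What the paper \emph{does} prove is the ARFIMA$(p,d,q)$ generalisation (Theorem~\ref{arfima(p,d,q)}), and your argument is essentially the $d=0$ specialisation of that proof: factor $\phi$ and $\psi$ over their roots using stationarity and invertibility, then show each $\int_{-\pi}^{\pi}\log|1-a e^{i\lambda}|\,d\lambda$ vanishes for $|a|<1$. The only substantive difference is that you justify the vanishing via Jensen's formula for $z\mapsto r_k-z$ on the unit disk, whereas the paper cites \cite{shiu_2004} for the identity $\int_{-\pi}^{\pi}\log|1-ae^{i\lambda}|\,d\lambda=0$ when $|a|\le 1$; these are the same fact packaged differently, so the two routes are equivalent in spirit.
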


\noindent Now, we state our extension to ARFIMA(0,d,0) and present a proof based on Ihara's proof of Theorem~\ref{thm:ihara}

\begin{theorem}\label{arfima(0,d,0)}
	The entropy rate of an ARFIMA(0,d,0) process is given by,
	$h(\raisedchi) = \frac{1}{2}\log(2\pi e \sigma_\epsilon^2)$.
\end{theorem}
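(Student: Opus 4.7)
The natural approach is to mimic Ihara's proof for the ARMA(p,q) case, which ultimately rests on the Szegő--Kolmogorov formula for the innovation variance of a stationary Gaussian process. I would begin by substituting the ARFIMA(0,d,0) spectral density, which from (\ref{eqn:arfima_spec_dens}) with $\phi=\psi=1$ is
\[
  f(\lambda) = \frac{\sigma_\epsilon^2}{2\pi}\,|1 - e^{i\lambda}|^{-2d},
\]
into Kolmogorov's formula (\ref{eqn:ihara}). Using $\log(2\pi f(\lambda)) = \log\sigma_\epsilon^2 - 2d\log|1-e^{i\lambda}|$, the entropy rate becomes
\[
  h(\raisedchi) = \tfrac{1}{2}\log(2\pi e) + \tfrac{1}{2}\log\sigma_\epsilon^2 - \frac{d}{2\pi}\int_{-\pi}^{\pi}\log|1-e^{i\lambda}|\,d\lambda.
\]
The entire problem therefore reduces to showing that the remaining integral vanishes.

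For this, I would use the classical identity $|1 - e^{i\lambda}| = 2|\sin(\lambda/2)|$, so that by symmetry and the substitution $u=\lambda/2$,
\[
  \int_{-\pi}^{\pi}\log|1-e^{i\lambda}|\,d\lambda
  = 2\pi\log 2 + 4\int_{0}^{\pi/2}\log\sin u\,du.
\]
The Euler log-sine integral $\int_{0}^{\pi/2}\log\sin u\,du = -\tfrac{\pi}{2}\log 2$ then makes the right-hand side vanish. Combining everything yields $h(\raisedchi) = \tfrac{1}{2}\log(2\pi e\sigma_\epsilon^2)$, exactly as in the ARMA case.

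The main obstacle is not the computation itself but the justification that Kolmogorov's formula can be applied to a process whose spectral density has a non-integrable singularity at the origin when $d>0$. Here the point is that although $f(\lambda)\to\infty$ as $\lambda\to 0$, $\log f(\lambda)$ has only a logarithmic singularity, which is integrable on $[-\pi,\pi]$; equivalently, $\log|1-e^{i\lambda}|$ is locally integrable near $\lambda=0$. I would therefore pause to check that the ARFIMA(0,d,0) process satisfies the regularity hypotheses behind (\ref{eqn:ihara}) (purely non-deterministic with $\int\log f\,d\lambda > -\infty$), which holds throughout $-\tfrac{1}{2}<d<\tfrac{1}{2}$, and cite or derive integrability of $\log|1-e^{i\lambda}|$ explicitly to justify splitting the log.

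With these two ingredients in place --- reduction via Kolmogorov's formula and the log-sine identity --- the proof is a short calculation that directly parallels Ihara's Proposition 2.2. The conceptual takeaway is that only the innovation variance $\sigma_\epsilon^2$, and not the differencing parameter $d$, appears in the entropy rate, mirroring the ARMA result and setting up the subsequent comparison with FGN.
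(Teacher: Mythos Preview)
Your proposal is correct and follows essentially the same route as the paper: both substitute the ARFIMA(0,d,0) spectral density into Kolmogorov's formula~(\ref{eqn:ihara}), reduce to the integral $\int_{-\pi}^{\pi}\log|1-e^{i\lambda}|\,d\lambda$, rewrite $|1-e^{i\lambda}|=2|\sin(\lambda/2)|$, and invoke the Euler log-sine identity $\int_0^{\pi/2}\log\sin u\,du=-\tfrac{\pi}{2}\log 2$ to make it vanish. Your explicit remark about integrability of $\log f$ near the origin is a useful justification that the paper defers to a later section.
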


\begin{proof}
	First we calculate $\int_{-\pi}^{\pi} \log(2\pi f(\lambda)) d\lambda$, using the spectral density of an ARFIMA process given in~(\ref{eqn:arfima_spec_dens})
	\begin{align*}
	\int_{-\pi}^{\pi} \log (2\pi f(\lambda)) d\lambda &= \int_{-\pi}^{\pi} \log(\sigma_\epsilon^2|1 - e^{i \lambda}|^{1-2H}) d\lambda,\\
	&= \int_{-\pi}^{\pi} \log(\sigma_\epsilon^2) d\lambda + (1-2H)\int_{-\pi}^{\pi} \log|1 - e^{i \lambda}| d\lambda.
	\end{align*}
	Now we transform the elements in the last term using their trigonometric representation, 
	\begin{align*}
	|1 - e^{i\lambda}| &= |1 - \cos(\lambda) - i\sin(\lambda)|,\\
	&= \sqrt{(1 - \cos(\lambda))^2 + \sin^2(\lambda)},\\
	&= \sqrt{2 - 2\cos(\lambda)},\\
	&= \sqrt{4 \sin^2\left(\frac{\lambda}{2}\right)},\\
	&= 2|\sin\left(\frac{\lambda}{2}\right)|.
	\end{align*}
	This makes the integral of the log spectral density,
	\begin{align*}
	\int_{-\pi}^{\pi} \log |1 - e^{i\lambda}| d\lambda &= 2 \int_{0}^{\pi} \log\left(2\sin\left(\frac{\lambda}{2}\right)\right) d\lambda,\\
	&= 2 \int_{0}^{\pi} \log(2) d\lambda + 2 \int_{0}^{\pi} \log\left(\sin\left(\frac{\lambda}{2}\right)\right) d\lambda,
	\end{align*}
	We substitute  $y = \lambda/2$,
	\begin{align*}
	\int_{-\pi}^{\pi} \log |1 - e^{i\lambda}| d\lambda &= 2 \pi \log(2) +  2 \int_{0}^{\frac{\pi}{2}} \log(\sin y) 2 dy,\\
	&= 2 \pi \log(2) + 4\left(-\frac{\pi}{2} \log(2)\right), \\
	&= 0.
	\end{align*}
	Where the equality $\int_{0}^{\frac{\lambda}{2}} \log(\sin y) dy = -\frac{\pi}{2} \log(2)$ is given  by~\cite{koyama2005}.
	
	\noindent So the last term of the spectral density vanishes, and 
	\begin{align*}
	\int_{-\pi}^{\pi} \log (2\pi f(\lambda)) d\lambda &= \int_{-\pi}^{\pi} \log(\sigma_\epsilon^2) d\lambda + (1-2H)\int_{-\pi}^{\pi} \log|1 - e^{i \lambda}| d\lambda,\\
	&= 2\pi \log(\sigma_\epsilon^2).
	\end{align*}
	Using Kolmogorov's entropy rate expression, the entropy rate is therefore,
	\begin{align*}
	h(\raisedchi) &= \frac{1}{2} \log (2 \pi e) +  \frac{1}{4\pi}\left(2\pi \log(\sigma_\epsilon^2)\right),\\
	&= \frac{1}{2} \log(2 \pi e \sigma_\epsilon^2).
	\end{align*}
\end{proof}

\begin{remark}
	This can be shown also using the infinite autoregressive expression above, $X_n = \epsilon_n - \sum_{k=1}^{\infty} \pi_kX_{n-k}$, and substituting into the conditional entropy rate for stationary processes, $h(\raisedchi) = \lim\limits_{n \rightarrow \infty} h(X_n | X_n-1, ... , X_0)$. Then we can remove the conditioning from the entropy rate calculation, \begin{align*}
		h(\raisedchi) = \lim\limits_{n \rightarrow \infty} h\left(\epsilon_n - \sum_{k=1}^{\infty} \pi_kX_{n-k} \bigg| X_n-1, ... , X_0\right) = \lim\limits_{n \rightarrow \infty} h(\epsilon_n).
	\end{align*} Which then implies that $h(\raisedchi) = \frac{1}{2}\log(2\pi e \sigma_\epsilon^2)$, \emph{i.e.}, the entropy rate of the process depends only on the entropy introduced at each step by the innovations. Therefore, we conclude that the entropy rate of an ARFIMA(0,d,0) process depends on the innovation variance and not $H$.
\end{remark}

We can generalise to ARFIMA(p,d,q) processes by adding an additional condition, the invertibility of the moving average polynomial. This is an extremely common condition applied in the theory of autoregressive-moving average, \emph{i.e.}, ARMA(p,q) processes. The condition implies that all roots of the moving average polynomial lie outside of the unit circle, and similarly the stationarity condition of the process ensures that all roots of the autoregressive polynomial lie outside of the unit circle (\cite{box2015time}). We will use these conditions on the ARFIMA processes to analyse their properties.

\begin{theorem}\label{arfima(p,d,q)}
	The entropy rate of a stationary ARFIMA(p,d,q) process with invertible moving average polynomial is given by,
	$h(\raisedchi) = \frac{1}{2}\log(2\pi e \sigma_\epsilon^2)$.
\end{theorem}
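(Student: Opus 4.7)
The plan is to follow exactly the template of the ARFIMA(0,d,0) proof: substitute the spectral density~(\ref{eqn:arfima_spec_dens}) into Kolmogorov's formula~(\ref{eqn:ihara}), split the logarithm into four separate integrals, and show three of them vanish. Concretely, after substitution we get
\begin{align*}
\int_{-\pi}^{\pi} \log(2\pi f(\lambda))\, d\lambda
&= 2\pi \log(\sigma_\epsilon^2)
 + \int_{-\pi}^{\pi} \log|\psi(e^{i\lambda})|^2 d\lambda \\
&\quad - \int_{-\pi}^{\pi} \log|\phi(e^{i\lambda})|^2 d\lambda
 - 2d \int_{-\pi}^{\pi} \log|1 - e^{i\lambda}|\, d\lambda.
\end{align*}
The last integral is zero by the same calculation already performed inside the proof of Theorem~\ref{arfima(0,d,0)} (substitution to $2\sin(\lambda/2)$ and the Koyama identity), so nothing new is needed there.

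The core of the work is to show that the two polynomial terms vanish. First I would factor the polynomials over their roots as $\phi(x) = \prod_{j=1}^p (1 - r_j^\phi x)$ and $\psi(x) = \prod_{j=1}^q (1 - r_j^\psi x)$, which is legitimate because $\phi(0) = \psi(0) = 1$. The stationarity assumption says the roots of $\phi$ lie outside the unit circle, which means $|r_j^\phi| < 1$; the invertibility assumption does the same for $\psi$. After this, both integrals reduce to a sum of integrals of the form $\int_{-\pi}^{\pi} \log|1 - r e^{i\lambda}|\, d\lambda$ with $|r| < 1$.

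Next I would invoke Jensen's formula: for any $r$ with $|r| \le 1$, the function $1 - rz$ is analytic and zero-free on the closed unit disk (or has its single zero on the boundary), and $|1 - rz|_{z=0} = 1$, so
\begin{align*}
\frac{1}{2\pi} \int_{-\pi}^{\pi} \log|1 - r e^{i\lambda}|\, d\lambda = \log|1| = 0.
\end{align*}
Summing over the roots gives $\int_{-\pi}^{\pi} \log|\phi(e^{i\lambda})|\, d\lambda = \int_{-\pi}^{\pi} \log|\psi(e^{i\lambda})|\, d\lambda = 0$, and doubling gives the same for the squared moduli.

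Combining the pieces, $\int_{-\pi}^{\pi} \log(2\pi f(\lambda))\, d\lambda = 2\pi \log(\sigma_\epsilon^2)$, and plugging back into Kolmogorov's formula gives $h(\raisedchi) = \tfrac{1}{2}\log(2\pi e \sigma_\epsilon^2)$, as claimed. The only real obstacle is making sure the stationarity and invertibility hypotheses line up with the direction of the Jensen inequality (roots \emph{outside} the unit circle for $\phi,\psi$ corresponds to factors $1 - r_j z$ with $|r_j| < 1$, which is the zero-free regime where Jensen gives $0$ rather than $\log|r_j|$); everything else is mechanical. The remark after Theorem~\ref{arfima(0,d,0)} also suggests an alternative path via the infinite AR representation and the conditional-entropy characterisation of the entropy rate, which would likely be shorter but requires justifying the interchange of limits when the AR coefficients only decay polynomially.
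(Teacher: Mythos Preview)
Your proposal is correct and follows essentially the same route as the paper: factor $\phi$ and $\psi$ into linear factors $1 - r_j x$ with $|r_j| < 1$ (by stationarity and invertibility), then use that $\int_{-\pi}^{\pi}\log|1 - r e^{i\lambda}|\,d\lambda = 0$ for $|r|\le 1$ to kill the polynomial contributions, leaving only $2\pi\log\sigma_\epsilon^2$. The sole cosmetic difference is that you justify the vanishing integral via Jensen's formula, whereas the paper cites an external reference for the same identity.
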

\begin{proof}
	Since ARFIMA(p,d,q) processes are stationary and invertible, this implies that the polynomials $\phi(x)$ and $\psi(x)$, have roots outside of the unit circle, i.e. each root $z \in \mathbb{C}$ is such that $|z| > 1$. By the Fundamental Theorem of Algebra, both the autoregressive and moving average polynomials can be factored into linear factors. As the constant terms are 1, this implies the polynomials can be factored as $\phi(x) = \prod_{i=1}^p (1 - a_i e^{i \lambda})$ and $\psi(x) = \prod_{i=i}^{q} (1 - b_i e^{i \lambda})$, where $|a_i|, |b_i| < 1, \forall i$. 
	Recall from equation~(\ref{eqn:arfima_spec_dens})that the spectral density is given by,
	\begin{align*}
		f(\lambda) &= \frac{\sigma_\epsilon^2}{2\pi} |1 - e^{i\lambda}|^{-2d} \frac{|\psi(e^{i\lambda})|^2}{|\phi(e^{i\lambda})|^2}.
	\end{align*}
	\noindent Hence,
	\begin{align*}
		\log((2\pi f(\lambda)))
		&= \log(\sigma_\epsilon^2) -2d\log|1-e^{i\lambda}| \\
		&\qquad + 2\log\left|\prod_{j=1}^{q}(1 - a_j e^{i\lambda})\right| -2\log\left|\prod_{j=1}^{p}(1 - b_j e^{i\lambda})\right|,\\
		&= \log(\sigma_\epsilon^2) -2d\log|1-e^{i\lambda}| \\
		&\qquad + \sum_{j=1}^{q} 2\log|1 - a_j e^{i\lambda}| - \sum_{j=1}^{p} 2\log|1 - b_j e^{i\lambda}|.
	\end{align*}
	Now we calculate the integral of the log spectral density,
	\begin{align*}
		\int_{-\pi}^{\pi} \log(2\pi f(\lambda)) d \lambda
		&= \int_{-\pi}^{\pi} \log(\sigma_\epsilon^2) d \lambda - \int_{-\pi}^{\pi} 2d\log|1-e^{i\lambda}| d \lambda\\
		&\qquad + \sum_{j=1}^{q} 2 \int_{-\pi}^{\pi} \log|1 - a_j e^{i\lambda}| d \lambda - \sum_{j=1}^{p} 2 \int_{-\pi}^{\pi} \log|1 - b_j e^{i\lambda}|  d \lambda,\\
		&= 2 \pi \log(\sigma_\epsilon^2).
	\end{align*}
	Where the third equality is given as all the integrals of $\log|1 - ae^{i\lambda}|$ over $[-\pi, \pi]$ vanish for $|a| \le 1$ (\cite{shiu_2004}).
	
	We substitute this expression into Kolmogorov's entropy rate expression for Gaussian processes.
	\begin{align*}
		h(\raisedchi) 
		&= \frac{1}{2}\log(2 \pi e) + \frac{1}{4 \pi} (2 \pi \log(\sigma_\epsilon^2)),\\
		&= \frac{1}{2}\log(2 \pi e \sigma_\epsilon^2).
	\end{align*}
\end{proof}

This result leads to the following corollary, which can finalise the discussion of the differential entropy rate in terms of innovation variance for the classes of AR, MA, ARMA processes. This is relevant as the definition in terms of the innovation variance is the perspective that is commonly used in the time series literature, when modelling real world processes.

\begin{corollary}
	The differential entropy rate of stationary AR(p), invertible MA(q) and, stationary and invertible ARMA(p,q) processes is $h(\raisedchi) = \frac{1}{2} \log(2 \pi e \sigma_\epsilon^2)$.
\end{corollary}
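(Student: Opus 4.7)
The plan is to deduce the corollary from Theorem~\ref{arfima(p,d,q)} by recognising each of AR(p), MA(q), and ARMA(p,q) as a special case of ARFIMA(p,d,q) with $d=0$. Specifically, an ARMA(p,q) process satisfies $\phi(L)X_n = \psi(L)\epsilon_n$, which is exactly the ARFIMA(p,0,q) definition since $(1-L)^0$ is the identity operator. AR(p) is then ARMA(p,0) with $\psi(x)=1$, and MA(q) is ARMA(0,q) with $\phi(x)=1$.

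The next step is to verify that the hypotheses of Theorem~\ref{arfima(p,d,q)} — stationarity and invertibility of the moving average polynomial — are available in each subcase. For ARMA(p,q) these are assumed directly. For AR(p), stationarity is assumed and invertibility is vacuous because $\psi(x)=1$ has no roots, so the linear factorisation is the empty product and all associated integrals in the proof of Theorem~\ref{arfima(p,d,q)} simply drop out. Symmetrically, for MA(q) invertibility is assumed and stationarity is vacuous because $\phi(x)=1$. In all three cases the conditions under which Theorem~\ref{arfima(p,d,q)} applies are therefore met.

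Having placed each process class inside the scope of Theorem~\ref{arfima(p,d,q)} with $d=0$, the conclusion $h(\raisedchi) = \frac{1}{2}\log(2\pi e \sigma_\epsilon^2)$ follows immediately by setting $d=0$ in the theorem's statement. Equivalently, in the integral computation inside the proof of Theorem~\ref{arfima(p,d,q)}, the term $-\int_{-\pi}^{\pi}2d\log|1-e^{i\lambda}|\,d\lambda$ vanishes identically, while the remaining $\log|1-a_je^{i\lambda}|$ and $\log|1-b_je^{i\lambda}|$ terms integrate to zero by the cited Jensen-type identity of \cite{shiu_2004}, leaving only the $2\pi\log(\sigma_\epsilon^2)$ contribution.

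There is no genuine obstacle here; the only thing to be careful about is the degenerate cases $p=0$ or $q=0$, where one of the polynomials is identically $1$ and the corresponding product factorisation is empty. Since an empty product contributes nothing to the spectral density and the subsequent integrals vanish term by term, the argument goes through unchanged, which justifies the unified statement for all three process families.
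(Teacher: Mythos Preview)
Your proposal is correct and matches the paper's approach: the corollary is stated immediately after Theorem~\ref{arfima(p,d,q)} with the remark that ``this result leads to the following corollary,'' and no separate proof is given, so the intended argument is precisely the specialisation to $d=0$ that you carry out. Your additional care about the degenerate cases $p=0$ or $q=0$ (empty factorisations, vacuous invertibility or stationarity) is more explicit than anything in the paper but entirely in keeping with its reasoning.
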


\noindent Hence, for these models the entropy rate can be calculated in terms of the variance of its innovations. However we want to compare the entropy rates, as a function of their Hurst parameter, between ARFIMA(0,d,0) and FGN, so we want to fix the variance of process itself, $\sigma^2$. We will use the autocovariance function of ARFIMA(0,d,0), from~\cite{beran1994statistics}, 
\begin{align*}
	\gamma(k) &= \sigma_\epsilon^2 \frac{(-1)^k \Gamma(1-2d)}{\Gamma(k-d+1)\Gamma(1-k-d)}.\\
\end{align*}
Note that $\gamma(0) = \sigma^2$,
\begin{align*}
	\sigma^2 = \gamma(0) &= \sigma_\epsilon^2 \frac{ \Gamma(1-2d)}{\Gamma(1-d)^2},
\end{align*}
and hence,
\begin{align*}
	 \sigma_\epsilon^2 &= \sigma^2\frac{\Gamma(1-d)^2}{\Gamma(1-2d)}.
\end{align*}

This leads to the following characterisation of ARFIMA(0,d,0) processes in terms of the Hurst parameter, $H$, noting that $d = H - \half$.

\begin{corollary}
	The entropy rate of an ARFIMA(0,d,0) process for a fixed process variance, $\sigma^2$, is given by, 
	\begin{align}
	h(\raisedchi) = \frac{1}{2} \log(2 \pi e \sigma^2) + \log\left(\Gamma\left(\frac{3}{2} - \mathcal{H}\right)\right) - \frac{1}{2} \log \bigg(\Gamma(2 - 2\mathcal{H})\bigg). \label{eq:arfima_entropy_rate}
	\end{align}
\end{corollary}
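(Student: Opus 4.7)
The plan is to derive the stated formula by direct substitution, chaining together three pieces that are already established in the excerpt: the entropy rate of an ARFIMA(0,d,0) process in terms of the innovation variance (Theorem~\ref{arfima(0,d,0)}), the relationship between the innovation variance and the process variance that the paper has just worked out from $\gamma(0)$, and the standard identification $d = H - \half$ between the differencing parameter and the Hurst parameter.

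First I would start from the conclusion of Theorem~\ref{arfima(0,d,0)}, namely $h(\raisedchi) = \tfrac{1}{2}\log(2\pi e \sigma_\epsilon^2)$, which holds regardless of how $\sigma_\epsilon^2$ is expressed. Next I would substitute the relation
\begin{equation*}
\sigma_\epsilon^2 = \sigma^2 \frac{\Gamma(1-d)^2}{\Gamma(1-2d)}
\end{equation*}
that was just derived from $\gamma(0)$ using Hosking's autocovariance formula. Splitting the logarithm of the product yields
\begin{equation*}
h(\raisedchi) = \tfrac{1}{2}\log(2\pi e \sigma^2) + \log \Gamma(1-d) - \tfrac{1}{2}\log \Gamma(1-2d).
\end{equation*}

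Finally I would rewrite the Gamma arguments using $d = H - \half$, so that $1 - d = \tfrac{3}{2} - H$ and $1 - 2d = 2 - 2H$. Plugging these in gives exactly the claimed expression. The only subtlety worth flagging is that the substitution $d = H - \half$ and the algebraic identity $2\log\Gamma(1-d) = \log\Gamma(1-d)^2$ must be handled carefully so that the exponent on the first Gamma becomes a coefficient of $1$ rather than $\tfrac{1}{2}$; apart from that bookkeeping there is no real obstacle, since every ingredient has already been proved earlier in the section.
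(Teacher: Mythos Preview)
Your proposal is correct and follows exactly the same route as the paper's own proof: start from Theorem~\ref{arfima(0,d,0)}, substitute $\sigma_\epsilon^2 = \sigma^2\,\Gamma(1-d)^2/\Gamma(1-2d)$, expand the logarithm, and replace $d$ by $H-\half$. There is nothing to add.
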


\begin{proof}
	By Theorem~\ref{arfima(0,d,0)} and from the characterisation of $\sigma_\epsilon^2$ above,
	\begin{align*}
		h(\raisedchi) &= \frac{1}{2} \log \left(2 \pi e \sigma^2\frac{\Gamma(1-d)^2}{\Gamma(1-2d)}\right),\\
		&= \frac{1}{2} \log(2 \pi e \sigma^2) + \log(\Gamma(1-d)) - \frac{1}{2} \log(\Gamma(1-2d)),\\
		&= \frac{1}{2} \log(2 \pi e \sigma^2) + \log\left(\Gamma\left(\frac{3}{2} - \mathcal{H}\right)\right) - \frac{1}{2} \log \bigg(\Gamma(2 - 2\mathcal{H})\bigg).
	\end{align*}
\end{proof}
\begin{remark}
	The same approach can be used for more general ARFIMA(p,d,q) processes. However in this case, there is no general closed form for the autocovariance function, so the variance must be calculated for each process and then substituted for the innovation variance. Interestingly, this result indicates that the effect of the changing the process variance is balanced by the effect of the change in the Hurst parameter, with respect to the innovation variance. This results in the constant differential entropy rate when considered in terms of its innovation variance.
\end{remark}

\begin{figure}
	\centering
	\includegraphics[width=\linewidth]{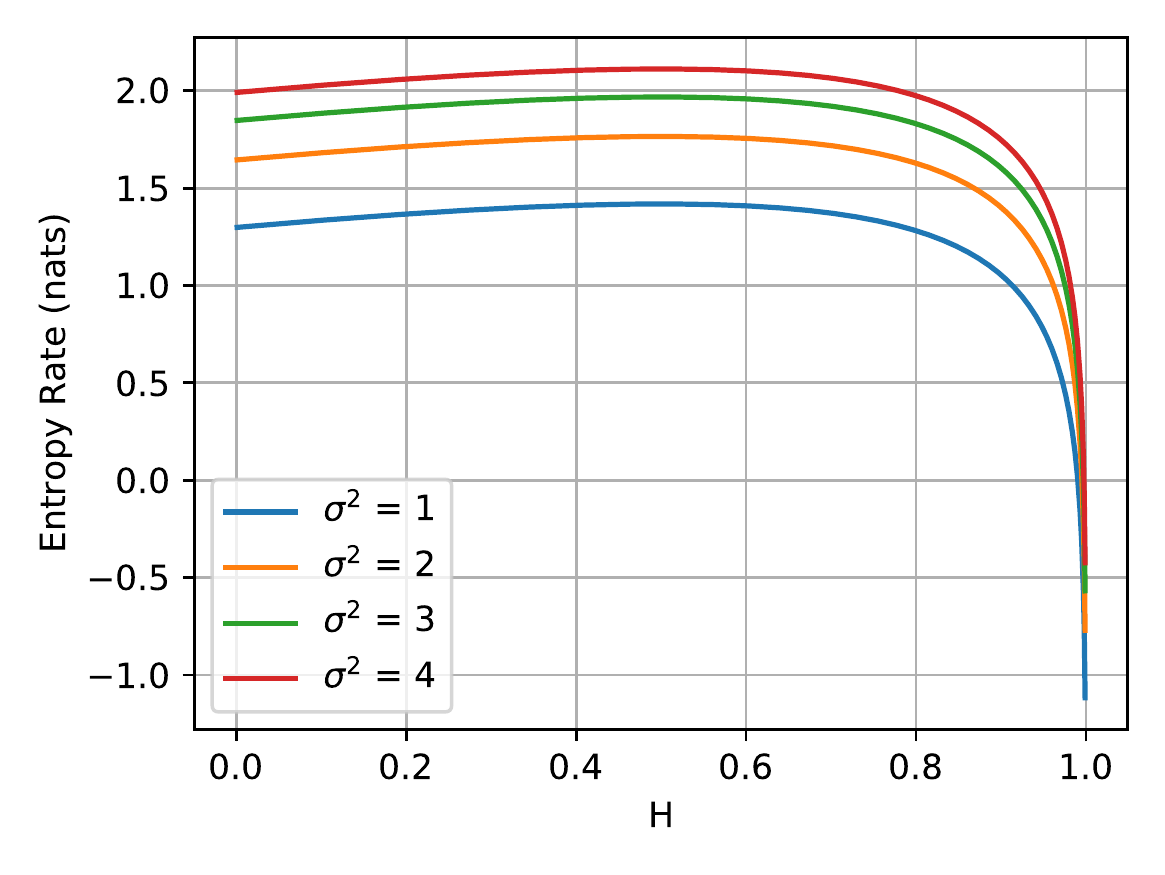}
	\caption{The entropy rate of ARFIMA(0,d,0) as a function of the Hurst parameter, $H$, for variance, $\sigma^2 = 1, 2, 3, 4$. On the positively correlated side, $H > 0.5$, we see a similar asymptotic behaviour to FGN. However, for negatively correlated processes, the amount of entropy in the process, stays quite high. We see the maximum of the function at $H=0.5$, which intuitively shows that the highest uncertainty occurs for the white Gaussian noise process.}
	\label{fig: arfima_entropy_rate_function}
\end{figure}

We show the plot of the ARFIMA(0,d,0) entropy rate as a function of the Hurst parameter, $H$, with process variance, $\sigma^2 = 1,2,3,4$, in Figure~\ref{fig: arfima_entropy_rate_function}. The plot shows some interesting behaviour, particularly when compared to the FGN entropy rate function in Figure~\ref{fig: arfima_fgn_entropy_rate_function_comparison}. Some of these observed properties are:
\begin{itemize}
	\item The entropy rate is not symmetric, much less so than FGN. The positively correlated side has a dramatic drop, however the negatively correlated side stays relatively high. In order words, there is a demonstrable difference between FGN and ARFIMA(0,d,0) in the behaviour as CSRD processes.
	\item The entropy rate asymptotically tends to $-\infty$ as $H \rightarrow$ 1 only. 
	\item The maximum entropy rate occurs at the same point as FGN, $\mathcal{H} =0.5$, indicating that the maximum entropy occurs for white Gaussian noise.
\end{itemize}

\begin{figure}
	\centering
	\includegraphics[width=\linewidth]{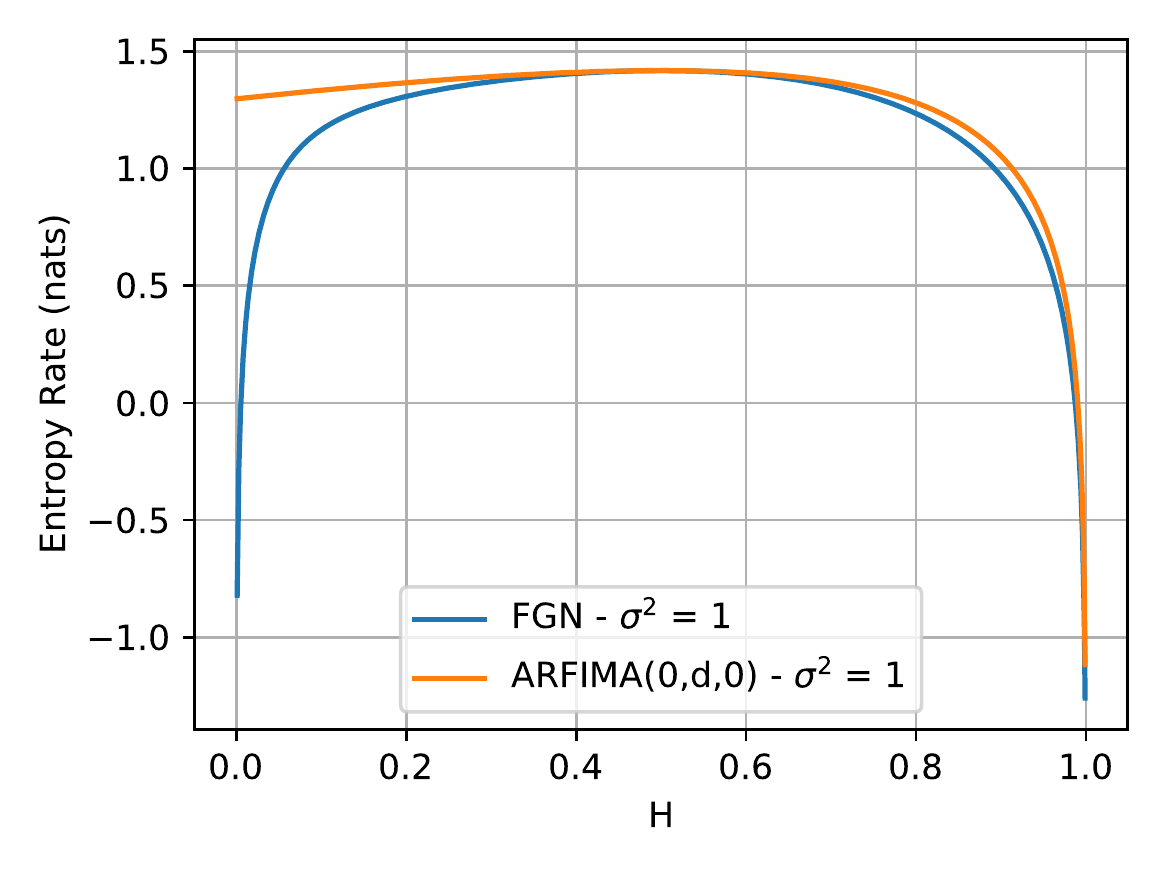}
	\caption{The comparison of the entropy rate as function of the Hurst parameter, for both ARFIMA(0,d,0) and FGN processes, with variance 1. It appears that the ARFIMA(0,d,0) process has an entropy rate which is greater than or equal to FGN for all values of $H$. The negatively correlated portion falls away quickly as $H \rightarrow 0$ for FGN but stays relatively high for the ARFIMA(0,d,0) process. The maximum of the functions coincide at $H=0.5$.}
	\label{fig: arfima_fgn_entropy_rate_function_comparison}
\end{figure}

\noindent Similar to the previous section, we will prove the asymptotics of the entropy rate function, and show that the maximum occurs at $H=0.5$.

\begin{corollary}
	The differential entropy rate of ARFIMA(0,d,0), $h(\raisedchi)$, tends to negative infinity as $H \rightarrow 1$, for a fixed variance $\sigma^2 \in \mathbb{R}$.
\end{corollary}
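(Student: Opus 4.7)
The plan is to read off the asymptotic behaviour directly from the closed form in~(\ref{eq:arfima_entropy_rate}) by tracking each of the three terms separately as $H \to 1^{-}$. The first term, $\frac{1}{2}\log(2\pi e \sigma^2)$, is constant in $H$, so it plays no role. The middle term, $\log\Gamma\!\left(\tfrac{3}{2} - H\right)$, is continuous in $H$ on a neighbourhood of $1$ and satisfies $\Gamma\!\left(\tfrac{3}{2} - H\right) \to \Gamma(\tfrac{1}{2}) = \sqrt{\pi}$, so it contributes a finite limit $\tfrac{1}{2}\log \pi$.

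The whole divergence therefore comes from the final term, $-\tfrac{1}{2}\log \Gamma(2 - 2H)$. The key fact I would invoke is that the Gamma function has a simple pole at $0$ with residue $1$, i.e.\ $\Gamma(x) \sim 1/x$ as $x \to 0^{+}$. Substituting $x = 2 - 2H$, which is positive and tends to $0$ as $H \to 1^{-}$, gives
\begin{align*}
\Gamma(2 - 2H) \sim \frac{1}{2(1 - H)} \quad \text{as } H \to 1^{-},
\end{align*}
so $\log \Gamma(2 - 2H) \to +\infty$ and hence $-\tfrac{1}{2}\log \Gamma(2-2H) \to -\infty$. Combining the three contributions, $h(\raisedchi) \to -\infty$.

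There is no real obstacle here: the argument is just a careful limit evaluation using the pole of $\Gamma$ at the origin. The only minor point worth stating cleanly is that one should check $2 - 2H > 0$ for $H < 1$ so that we really are approaching the pole from the right, which is immediate from the stationarity range $d \in (-\tfrac{1}{2}, \tfrac{1}{2})$ (equivalently $H \in (0,1)$). I would write the argument as a two- or three-line calculation after stating the asymptotic $\Gamma(x) \sim 1/x$.
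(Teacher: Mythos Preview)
Your proposal is correct and essentially identical to the paper's own proof: both track the three terms of~(\ref{eq:arfima_entropy_rate}), note that the first two stay finite as $H \to 1$, and conclude from the pole of $\Gamma$ at $0$ that $-\tfrac{1}{2}\log\Gamma(2-2H) \to -\infty$. Your version is slightly more detailed (giving the precise asymptotic $\Gamma(x)\sim 1/x$ and the value $\Gamma(\tfrac12)=\sqrt{\pi}$), but the argument is the same.
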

\begin{proof}
	As $H \rightarrow 1$, the term $\Gamma(\frac{3}{2} - H)$ is finite, as well as the $\frac{1}{2} \log(2 \pi e \sigma^2)$, for a fixed variance $0 < \sigma^2 < \infty$. 
	Now, as $H \rightarrow 1$, the term $\Gamma(2 - 2H) \rightarrow  \Gamma(0)$. There exists a singularity for the gamma function at 0, which diverges to infinity. Which implies that the term $-\frac{1}{2} \log(\Gamma(2 - 2H)) \rightarrow -\infty$, since $\Gamma(x) \rightarrow \infty$, as $x \rightarrow 0$. This implies that $h(\raisedchi) \rightarrow -\infty$, as $H \rightarrow 1$.
\end{proof}
\begin{remark}
	Note that the value of the entropy rate function for an ARFIMA(0,d,0) process as $H \rightarrow 0$, is 
	\begin{align*}
		h(\raisedchi) = \frac{1}{2} \log(2 \pi e \sigma^2) + \log(\Gamma(\frac{3}{2})) - \frac{1}{2} \log(\Gamma(2)) \approx 1.298.
	\end{align*}
\end{remark}

To complete this section of the analysis, we will consider the maximum of the entropy rate function of ARFIMA(0,d,0), and conclude which Hurst parameter has the highest uncertainty, in the sense of maximum differential entropy rate. 

\begin{theorem}
	The differential entropy rate of ARFIMA(0,d,0) as a function of $H$ attains the maximum at $H = \half$.
\end{theorem}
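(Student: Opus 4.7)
The plan is to treat this as a one-variable calculus problem on $(0,1)$, by differentiating the closed-form entropy rate from equation~(\ref{eq:arfima_entropy_rate}) with respect to $H$. The first term is constant in $H$, so it plays no role, and the derivatives of the remaining log-gamma terms produce digamma functions via $\frac{d}{dH}\log\Gamma(a+bH)=b\,\psi(a+bH)$. I expect to obtain
\begin{align*}
\frac{dh}{dH} = -\psi\!\left(\tfrac{3}{2}-H\right) + \psi(2-2H),
\end{align*}
so the critical-point equation reduces to $\psi(2-2H)=\psi\!\left(\tfrac{3}{2}-H\right)$.

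Next I would exploit the fact that the digamma function $\psi$ is strictly increasing on $(0,\infty)$ (its derivative is the trigamma function, which is positive). For $H\in(0,1)$ both arguments $2-2H$ and $\tfrac{3}{2}-H$ lie in $(0,\tfrac{3}{2})$, so injectivity of $\psi$ on the positive reals forces the arguments themselves to be equal: $2-2H = \tfrac{3}{2}-H$, which yields the unique critical point $H=\tfrac{1}{2}$.

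To confirm that this critical point is actually a maximum rather than a minimum or inflection, I would use the same monotonicity argument to sign $\frac{dh}{dH}$ on either side of $\tfrac{1}{2}$: when $H<\tfrac{1}{2}$ we have $2-2H>\tfrac{3}{2}-H$ and hence $\psi(2-2H)>\psi(\tfrac{3}{2}-H)$, giving a positive derivative; when $H>\tfrac{1}{2}$ the inequality flips and the derivative is negative. So $h$ is strictly increasing on $(0,\tfrac{1}{2})$ and strictly decreasing on $(\tfrac{1}{2},1)$, which identifies $H=\tfrac{1}{2}$ as the unique global maximum. As a sanity check one can also note that $h\to-\infty$ as $H\to 1$ (by the preceding corollary) and $h$ approaches a finite value strictly below $h(\tfrac{1}{2})$ as $H\to 0$, consistent with Figure~\ref{fig: arfima_entropy_rate_function}.

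The only real subtlety I anticipate is being careful that both digamma arguments remain in the domain $(0,\infty)$ where $\psi$ is smooth and strictly monotone; this is automatic for $H\in(0,1)$, so no special handling of endpoints is needed. The rest is routine differentiation and the one-line monotonicity argument, with no transcendental equation to solve numerically as was needed for FGN.
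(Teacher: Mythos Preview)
Your approach is essentially the same as the paper's: differentiate the closed-form expression~(\ref{eq:arfima_entropy_rate}), reduce the critical-point equation to an equality of digamma values, and use strict monotonicity of $\psi$ on $(0,\infty)$ to conclude uniqueness at $H=\tfrac12$. Your derivative $\frac{dh}{dH}=-\psi(\tfrac32-H)+\psi(2-2H)$ is the correct one (the paper's displayed expression drops the chain-rule signs and has the overall sign flipped, though this does not affect the critical-point equation), and you additionally verify via the sign of $\frac{dh}{dH}$ on either side of $\tfrac12$ that the critical point is indeed a maximum, a step the paper omits.
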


\begin{proof}
We differentiate the entropy rate function with respect to $H$, and we get
\begin{align*}
	\frac{dh(\raisedchi)}{dH} &= \frac{d}{dH}\left(\frac{1}{2} \log(2 \pi e \sigma^2) + \log\left(\Gamma(\frac{3}{2} - H)\right) - \frac{1}{2} \log\left(\Gamma(2 - 2H)\right)\right),\\
	&= \frac{d}{dH}\left(\log(\Gamma(\frac{3}{2} - H))\right) - \frac{d}{dH}\left(\log(\Gamma(2 - 2H))\right),\\
	&= \frac{\Gamma(\frac{3}{2} - H)\psi\left(\frac{3}{2} - H\right)}{\Gamma(\frac{3}{2} - H)} - \frac{\Gamma(2 - 2H)\psi(2 - 2H)}{\Gamma(2 - 2H)},\\
	&= \psi\left(\frac{3}{2} - H\right) - \psi(2 - 2H),
\end{align*}
where $\psi(x)$ is the digamma function. 

Then we set $\frac{dh(\raisedchi)}{dH} = 0$, and solve for $H$. Since $\psi(x)$ is a monotonically increasing function on $\mathbb{R^+}$, this implies that $\frac{dh(\raisedchi)}{dH}$ has one solution. Since $\frac{3}{2} - H = 2 - 2H$ only when $H = \half$, this implies that $h(\raisedchi)$ achieves a unique maximum at this point.
\end{proof}

This aligns with our intuition, that the highest uncertainty occurs for this model when it is uncorrelated and equal to white Gaussian noise, as it simplifies to $X_n = \epsilon_n$, identical to FGN processes. This explains why the maxima coincide for the two processes, given the same process variance, although ARFIMA(0,d,0) appears to higher differential entropy across the entire parameter range, when not at $H = 0.5$. This provides support that the maximum entropy process for LRD covariance constraints is the ARFIMA class, which echos previous results in this area such as Burg's Theorem (\cite{choi1984information}), that the AR and ARMA class of processes are the maximum entropy models given appropriate constraints on the covariances and impulse responses (see~\cite{ihara1984maximum, franke1985arma}).

We have shown in this section that the behaviour for the ARFIMA(0,d,0) model differs from that of FGN in the behaviour of their CSRD processes. This is a surprising discovery and warrants further investigation. Both models, however, have much less uncertainty as the strength of the positive correlations increases, as well as a maximum uncertainty occurring for uncorrelated processes. Hence, we may be able to characterise the behaviour of LRD processes on the entropy rate as tending to $-\infty$ as the strength of correlations increases. 

In remainder of the paper we look at other information theoretic measures as way to characterise the behaviour of SRD and LRD processes.

\section{Mutual Information and Excess Entropy for Long Range Dependent Processes}\label{mutual_information}

 In this section we continue analysing of the differential entropy rate for stochastic processes with power-law decaying covariance function. We investigate the links between the amount of entropy that is accumulated during the convergence of the conditional entropy to the entropy rate and the amount of information that is shared between the past and future of a stochastic process. 

 We extend the standard notion of mutual information to the special case of mutual information between past and future, $I_\pf$, which will measure the amount of information about the infinite future, given knowledge of the infinite past of stochastic processes.

\begin{definition}
	The mutual information for continuous random variables is defined as
	\begin{align*}
	 I(X;Y) = \int_{\mathcal{Y}} \int_{\mathcal{X}} f(x, y) \log\left(\frac{f(x,y)}{f(x)f(y)}\right)dx dy,
	 \end{align*} 
	 and in particular the mutual information between past and future with $n$ lags, $I^{(n)}$, for a stochastic process $\{X_i\}_{i \in \mathbb{Z}}$ is defined as $I(\{X_s, s < 0\},\{X_s, s \ge n\})$. The case with $n=0$ is called the mutual information between past and future, $I_\pf$, and is of special interest.
\end{definition}


We present a theorem from~\cite{Li_2004}, that links the value of $I_\pf$, and autocovariance function and the Fourier coefficients of the logarithm of the spectral density function.

\begin{theorem}[From~\cite{Li_2004}]\label{Li_thm_1}
	Let $\{X_i\}_{i \in \mathbb{Z}}$ be a stationary Gaussian stochastic process:
	\begin{itemize}
		\item if the spectral density $f(\lambda)$ is continuous and $f(\lambda) > 0$, then $I_\pf$ is finite if and only if the autocovariance function satisfies the condition $\sum_{k=-\infty}^{\infty} k \gamma(k)^2 < \infty$.
		\item $I_\pf$ is finite if and only if the cepstrum coefficients, $b_k = \frac{1}{2 \pi} \int_{-\pi}^{\pi} \log f(\lambda)e^{-ik\lambda} d\lambda$, satisfy the condition $\sum_{k=1}^{\infty} kb_k^2 < \infty$. In this case, $I_{p-f} = \frac{1}{2} \sum_{k=1}^{\infty} kb_k^2$.
	\end{itemize}
\end{theorem}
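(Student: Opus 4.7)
The plan is to establish the second bullet first, since it contains the explicit formula $I_\pf=\tfrac{1}{2}\sum_{k\geq 1}kb_k^2$, and then to deduce the first bullet as a regularity transfer between $f$ and $\log f$. Both bullets ultimately reduce to a Hilbert--Schmidt condition on a certain Hankel operator.

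For the cepstrum bullet I would work in the spectral representation of the process. Since the definition of $b_k$ requires $\log f\in L^1$, which is granted by the standing continuity and positivity assumption (and is the minimal setting in which the statement is meaningful), the Szeg\H{o} factorisation $f=|h|^2$ holds with $h$ outer in $H^2$, and the closed linear spans of past and future identify isometrically with the negative- and nonnegative-indexed Hardy subspaces of $L^2(\mathbb{T})$. The angle between these two subspaces is controlled by a Hankel operator $\Gamma$ whose symbol is built from $\log f$ via the factorisation, with matrix entries proportional to the cepstrum coefficients, so that Parseval yields $\|\Gamma\|_{HS}^2\propto\sum_{k\geq 1}k|b_k|^2$. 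The Gaussian past-future mutual information then admits the Gelfand--Yaglom representation
\begin{equation*}
I_\pf \;=\; -\tfrac{1}{2}\log\det(I-\Gamma^*\Gamma),
\end{equation*}
so $I_\pf<\infty$ is equivalent to $\Gamma$ being Hilbert--Schmidt together with $\|\Gamma\|<1$, and the explicit constant $\tfrac{1}{2}$ falls out from the leading term in the expansion of the log-determinant, with the higher-order trace contributions absorbed by the Hankel structure.

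For the autocovariance bullet, note that $\gamma(k)$ and $b_k$ are the Fourier coefficients of $f$ and $\log f$ respectively, so the two summability conditions say exactly that $f\in H^{1/2}(\mathbb{T})$ and $\log f\in H^{1/2}(\mathbb{T})$. The standing hypotheses make $f$ bounded above and away from zero on the compact circle, and in that regime the composition maps $g\mapsto\log g$ and $g\mapsto e^g$ act boundedly on $H^{1/2}(\mathbb{T})$ by standard fractional chain-rule estimates, delivering the equivalence in both directions. Combined with the second bullet, this yields the first.

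The main obstacle I anticipate is pinning down the operator-theoretic identification cleanly: exhibiting the Hankel operator whose singular values equal the canonical correlations between past and future, and verifying that its symbol produces exactly the cepstrum coefficients rather than the Fourier coefficients of some related factorisation datum. The Szeg\H{o} factorisation only fixes $h$ up to a unimodular inner function, and different conventions shift the Hankel symbol by that factor, so care is needed to arrive at the $b_k$ with the correct numerical constant. A secondary technical point is confirming $\|\Gamma\|<1$ in the first bullet so that the log-determinant is finite once $\Gamma$ is Hilbert--Schmidt; here continuity and strict positivity of $f$ force non-tangency of past and future by a compactness argument on the spectrum of $\Gamma^*\Gamma$.
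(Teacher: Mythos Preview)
The paper does not prove this theorem. It is quoted as a result from \cite{Li_2004} and used as a black box; no argument is supplied between the statement and the subsequent remark. There is therefore nothing in the paper to compare your proposal against.

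That said, your outline is broadly in the spirit of the classical treatment (Ibragimov--Rozanov, Golinskii--Ibragimov, and Li's refinement): Szeg\H{o} factorisation, identification of past and future with orthogonal complements in $H^2$, canonical correlations as singular values of a Hankel operator, and the Gelfand--Yaglom log-determinant formula for Gaussian mutual information. The regularity transfer you propose for the first bullet---that $f\in H^{1/2}$ iff $\log f\in H^{1/2}$ when $f$ is continuous and bounded away from zero---is exactly the Ibragimov--Solev mechanism.

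One point in your sketch deserves more care. You write that the exact identity $I_\pf=\tfrac{1}{2}\sum_{k\ge 1}kb_k^2$ follows because ``the explicit constant $\tfrac{1}{2}$ falls out from the leading term in the expansion of the log-determinant, with the higher-order trace contributions absorbed by the Hankel structure.'' This is where the real work lies: $-\tfrac{1}{2}\log\det(I-\Gamma^*\Gamma)=\tfrac{1}{2}\sum_k\rho_k^2+\tfrac{1}{4}\sum_k\rho_k^4+\cdots$, and it is not automatic that the full series collapses to the Hilbert--Schmidt norm of $\Gamma$. The exact equality is essentially the Borodin--Okounkov/Geronimo--Case identity (equivalently, the sharp form of the strong Szeg\H{o} theorem), and a complete proof has to invoke or reproduce that machinery rather than truncate the trace expansion. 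Your plan would benefit from naming that step explicitly.
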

\begin{remark}
	The convergence of the sum $\sum_{k=-\infty}^{\infty} k b_k^2$ requires that $\sum_{k=1}^{\infty} k b_k^2 < \infty$ and $\sum_{k=1}^{\infty} -k b_{-k}^2 < \infty$ separately.
\end{remark}
	
\noindent This theorem gives us a way to classify whether processes have infinite mutual information between past and future, in this paper we will use this quantity to analyse convergence towards the entropic rate, in particular for LRD stochastic processes. 

In the next result we make an explicit link between LRD processes and the finiteness of the mutual information between past and future. This perspective provides us with a characterisation of LRD processes, these are processes that ``share infinite information from the past to the future".

\begin{theorem}\label{mutual_infinite}
	The Mutual Information between past and future,  $I_\pf$, for stationary LRD Gaussian processes is infinite.
\end{theorem}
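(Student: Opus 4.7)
The plan is to apply Theorem~\ref{Li_thm_1} directly, reducing the infiniteness of $I_\pf$ to a series divergence that follows essentially for free from the LRD definition. In particular, the first bullet of Li's theorem turns the question into the summability of $\sum_k k\gamma(k)^2$, and the LRD hypothesis supplies the exact polynomial asymptotics of $\gamma(k)$ needed to check that criterion.

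Concretely, the first step is to invoke Theorem~\ref{Li_thm_1}(i), which states that $I_\pf$ is finite iff $\sum_{k=-\infty}^{\infty} k\gamma(k)^2 < \infty$ (interpreted as both one-sided tails being finite, per the remark following the theorem). By Definition~\ref{def:lrd_alpha}, a stationary LRD process satisfies $\gamma(k)\sim c_\gamma k^{-\alpha}$ with $\alpha\in(0,1)$, so
\[
  k\,\gamma(k)^2 \;\sim\; c_\gamma^{\,2}\, k^{\,1-2\alpha} \quad\text{as } k\to\infty.
\]
Since $\alpha<1$, the exponent $1-2\alpha$ is strictly greater than $-1$, so $\sum_{k\geq 1} k^{1-2\alpha}$ diverges (it is a $p$-series with $p = 2\alpha-1 < 1$). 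Limit comparison then forces $\sum_{k\geq 1} k\gamma(k)^2 = \infty$, and Li's theorem yields $I_\pf = \infty$.

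The main obstacle is the continuity and strict positivity hypothesis in Theorem~\ref{Li_thm_1}(i): for a generic LRD process the spectral density obeys $f(\lambda)\sim c_f|\lambda|^{-\beta}$ near the origin, so it is unbounded at $\lambda=0$ and the hypothesis as stated is not literally satisfied. The fallback is the cepstrum characterisation in the second bullet, which is stated without the continuity condition. Writing $\log f(\lambda) = -\beta\log|\lambda| + g(\lambda)$ with $g$ bounded on $[-\pi,\pi]$, a single integration by parts gives that the Fourier coefficients of $\log|\lambda|$ behave like $c/k$ for a non-zero constant $c$ (essentially because $\tfrac{1}{\pi}\int_0^\pi \log\lambda\cos(k\lambda)\,d\lambda = -\operatorname{Si}(k\pi)/(k\pi) \to -1/(2k)$), so $b_k \sim \beta/(2k)$ and $\sum_{k\geq 1} k b_k^2 \sim \sum 1/k = \infty$. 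Either route exposes the same mechanism: LRD imposes a power-law singularity, in either the autocovariance tail or the spectrum at the origin, that is too mild to decay and too strong to be summable against the weight $k$, which is exactly what Li's characterisation forbids for finite past--future mutual information.
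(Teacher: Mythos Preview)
Your proposal is correct and matches the paper's approach almost exactly: the paper's main proof uses precisely the cepstrum route you describe as the fallback (computing $b_k$ from the $\log|\lambda|$ singularity via integration by parts and the sine integral, obtaining $b_k\sim C/k$ and hence $\sum_k kb_k^2$ diverging like a harmonic series), while your first route via $\sum_k k\gamma(k)^2$ appears verbatim in the remark immediately following the paper's proof, with the same caveat about the continuity/positivity hypothesis. Your asymptotic constant $b_k\sim\beta/(2k)$ is in fact the correct limiting value (since $\mathrm{Si}(k\pi)\to\pi/2$), whereas the paper's displayed constant $\mathrm{Si}(\pi)$ appears to be a minor slip, but this does not affect the divergence conclusion either way.
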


\begin{proof}
To analyse LRD processes we use the spectral density asymptotic representation around the origin, $f(\lambda) \sim c_f|\lambda|^{1-2H}$, as we can see the divergence by considering the asymptotic behaviour around the singularity at the origin. Hence, as $b_k = \frac{1}{2 \pi} \int_{-\pi}^{\pi} \log f(\lambda) e^{-ik\lambda} d \lambda$, then
\begin{align*}
b_k &\sim \frac{1}{2 \pi} \int_{-\pi}^{\pi} \log\left(c_f|\lambda|^{1-2H}\right) e^{-ik\lambda} d \lambda,\\
&= \frac{\log c_f}{2 \pi} \int_{-\pi}^{\pi} e^{-ik\lambda} d \lambda + \frac{1-2H}{2 \pi} \int_{-\pi}^{\pi} \log |\lambda| e^{-ik\lambda} d \lambda.\\
\end{align*}
We split the integral into positive and negative components,
\begin{align*}
b_k &= \frac{2 \log c_f \sin\left(\pi k\right)}{2 \pi k} + \frac{1-2H}{2 \pi} \int_{0}^{\pi} \log \lambda \left(e^{-ik\lambda} + e^{ik\lambda}\right) d \lambda.
\end{align*}
Since $\sin(\pi k) = 0, \forall k \in \mathbb{Z}$, the first term vanishes and we only need to consider the integral. This can be decomposed into a trigonometric representation since $e^{-ik\lambda} + e^{ik\lambda} = 2\cos\left(k \lambda\right)$, so
\begin{align*}
b_k &\sim \frac{1-2H}{2 \pi} \int_{0}^{\pi} 2 \log \lambda \cos\left(k \lambda\right) d \lambda.
\end{align*}
Integrating this expression by parts, we get
\begin{align*}
	b_k &\sim \frac{1-2H}{2 \pi} \left[\frac{\log \lambda \sin\left(k\lambda\right)}{k}\right]_0^\pi - \int_{0}^{\pi} \frac{\sin\left(k\lambda\right)}{k\lambda} d \lambda.
\end{align*}
To analyse the integral in the second term we use the substitution, $u = k\lambda$, and therefore
\begin{align*}
	\int_{0}^{\pi} \frac{\sin\left(k\lambda\right)}{k\lambda} d \lambda &= \int_{0}^{k\pi} \frac{\sin\left(u\right)}{u} du.
\end{align*}
Then undoing the u-substitution and noting that Si$(x) = \int_{0}^{x} \frac{\sin t}{t} dt$, is the Sine integral,
\begin{align*}
 b_k &\sim \frac{1-2H}{\pi}\left[\frac{\log \lambda \sin\left(k \lambda\right) - \text{Si}(\lambda)}{k}\right]_0^\pi,\\
&= \frac{\left(1-2H\right)\left(-\text{Si}\left(\pi\right)\right)}{\pi k}.
\end{align*}

The partial sum has the asymptotic form,
\begin{align*}
\sum_{k=1}^{m} k b_k^2 &\sim \sum_{k=1}^{m} \frac{\left(1-2H\right)^2\ Si\left(\pi\right)^2}{\pi^2 k},\\
&= \frac{\left(1-2H\right)^2\ Si\left(\pi\right)^2}{\pi^2} \sum_{k=1}^{m} \frac{1}{k},\\
&\sim \frac{\left(1-2H\right)^2\ Si\left(\pi\right)^2}{\pi^2} \log m,
\end{align*}
because the harmonic series, $\sum_{k=1}^{m} \frac{1}{k} \sim \log(m)$ as $m \rightarrow \infty$. Hence, as $m \rightarrow \infty$, the sum, $\sum_{k=1}^{m} k b_k^2$, diverges. This implies that the sum, $\sum_{k=-\infty}^{\infty} k b_k^2$, diverges and therefore by Theorem~\ref{Li_thm_1}, $I_\pf$ is infinite.
\end{proof}

\begin{remark}
We can quite easily show this result with the additional assumptions that the spectral density, $f(\lambda)$ is positive and continuous. The asymptotic expression of the autocovariance function, $\gamma(k) \sim \sigma^2 c_{\rho} |k|^{-\alpha}$. Hence, considering the sum, $\sum_{k=1}^{\infty} k \gamma(k)^2$, from the first part of Theorem~\ref{Li_thm_1},
\begin{align*}
	\sum_{k=1}^{\infty} k \gamma(k)^2 &\sim \sum_{k=1}^{\infty} k (\sigma^2 c_{\rho} |k|^{-\alpha})^2 \\
	&= \sigma^4 c_{\rho}^2 \sum_{k=1}^{\infty} k^{1-2\alpha}.
\end{align*}
This sum diverges in the parameter range, $\alpha \in (0,1)$, and hence by Theorem~\ref{Li_thm_1}, $I_\pf$ is infinite for LRD processes.

There exist many processes that have infinite excess entropy but are not long range dependent. Some examples are given, including deterministic processes, in~\cite{crutchfield_feldman_2003}.
\end{remark}


\cite{crutchfield_feldman_2003} analysed a quantity named excess entropy, $\sum_{n=1}^{\infty} \left(H(X_n|X_{n-1}, \ldots, X_0) - H(\raisedchi)\right)$, for the Shannon entropy $H$ and corresponding entropy rate $H(\raisedchi)$, which has been shown to be equivalent to the mutual information between past and future. This has been named, with implicit interpretation, as stored information (\eg \cite{shaw1984dripping}), effective measure complexity (\eg \cite{grassberger1986toward, lindgren1988complexity}), predictive information (\eg \cite{nemenman2000information}). Importantly, it has been used to measure the convergence rate of the conditional entropy, based on past observations, to the entropy rate. We aim to extend this result to differential entropy, and then the question of classification of LRD processes via the amount of shared information can be made by the convergence rate to the entropy rate. We extend the definition of the excess entropy to the case of differential entropy.

\begin{definition}\label{def:differential_excess_entropy}
	The differential \emph{excess} entropy, $E$, of a stochastic process, $\{X_i\}_{i \in \mathbb{N}}$, is defined as,
	\begin{align*}
	E &= \sum_{n=1}^{\infty} \left(h_e(n) - h(\raisedchi)\right),\\
	&= \lim\limits_{n \rightarrow \infty} \left[ h(X_n, \ldots, X_0) - n h(\raisedchi)\right].
	\end{align*}
	where 
	\begin{align*}
		h_e(n) &= h(X_1, .. , X_n) - h(X_1, ... , X_{n-1}),\\
		 &= h(X_n | X_{n-1}, ... , X_1).
	\end{align*}
\end{definition}

We have the tools available to make an explicit link between the mutual information between past and future and the excess entropy of a continuous-valued, discrete-time stochastic process. This is an exact analogue of Proposition 8 from~\cite{crutchfield_feldman_2003} and has been stated utilising a different approach by~\cite{ding2016entropic}.

\begin{theorem}\label{mutual_excess}
	For a stationary, continuous-valued stochastic process, the mutual information between past and future, $I_\pf$, is equal to the differential excess entropy, $E$.
\end{theorem}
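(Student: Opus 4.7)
The plan is to mimic Proposition~8 of \cite{crutchfield_feldman_2003} but in the differential-entropy setting, using two ingredients: stationarity (to identify $h(X_n \mid X_{n-1}, X_{n-2}, \ldots) = h(\raisedchi)$) and the chain rule for mutual information (which telescopes conditional mutual informations into a single joint mutual information).

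First I would rewrite each summand of the excess entropy using stationarity. Since $h(\raisedchi)$ can be expressed as $h(X_n \mid X_{n-1}, X_{n-2}, \ldots)$ by Theorem~\ref{conditional_entropy_rate} extended to the bi-infinite past, we obtain
\begin{align*}
h_e(n) - h(\raisedchi)
&= h(X_n \mid X_{n-1}, \ldots, X_1) - h(X_n \mid X_{n-1}, X_{n-2}, \ldots, X_0, X_{-1}, \ldots) \\
&= I\bigl(X_n ; X_0, X_{-1}, \ldots \bigm| X_{n-1}, \ldots, X_1\bigr).
\end{align*}
This rewrites each term as a conditional mutual information between a single future symbol and the entire past, given the intervening block.

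Next I would telescope. The chain rule for mutual information gives
\begin{align*}
\sum_{n=1}^{N}\!I\bigl(X_n; X_0, X_{-1}, \ldots \bigm| X_{n-1}, \ldots, X_1\bigr)
= I\bigl(X_1, \ldots, X_N \,;\, X_0, X_{-1}, \ldots\bigr).
\end{align*}
Letting $N \to \infty$, the right-hand side converges to $I(X_1, X_2, \ldots ; X_0, X_{-1}, \ldots)$, which by stationarity equals $I_\pf = I(\{X_s, s < 0\};\{X_s, s \geq 0\})$. The left-hand side is, by definition, the excess entropy $E$, so $E = I_\pf$. As a consistency check, one can also run the argument from the block form: $h(X_1,\ldots,X_n) - n h(\raisedchi) = h(X_1,\ldots,X_n) - h(X_1,\ldots,X_n \mid X_0, X_{-1}, \ldots) = I(X_1,\ldots,X_n;X_0,X_{-1},\ldots)$, which produces the same limit.

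The main obstacle is measure-theoretic: justifying conditioning on the infinite past for differential entropies, specifically the identity $h(X_n \mid X_{n-1}, X_{n-2}, \ldots) = h(\raisedchi)$ and the convergence of the mutual informations $I(X_1,\ldots,X_N; X_0^{-M})$ as both $N$ and $M$ tend to infinity. For Shannon entropy these are standard (the martingale argument), but for differential entropy one must either assume enough regularity (e.g.\ that the finite-dimensional densities are well defined and the conditional differential entropies converge monotonically, as is automatic in the Gaussian case of primary interest here) or pass to mutual informations, which are defined via Radon--Nikodym derivatives and do not suffer the pathologies of differential entropy. I would phrase the proof in the latter way so the telescoping identity and the limiting mutual information are both finite or infinite simultaneously, matching Theorem~\ref{mutual_infinite}.
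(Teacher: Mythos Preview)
Your proposal is correct and takes a genuinely different route from the paper. The paper starts from a \emph{symmetric} finite window, writing
\[
I\bigl(\{X_s: -n\le s<0\};\{X_s:0\le s\le n\}\bigr)=\sum_{i=0}^{n-1}\Bigl(h(X_i\mid X_{i-1},\ldots,X_0)-h(X_i\mid X_{i-1},\ldots,X_{-n})\Bigr),
\]
and then sends $n\to\infty$, invoking the dominated convergence theorem with dominating function $g(i)=h_e(i)-h(\raisedchi)$ to push the limit through the sum. You instead fix the past as infinite from the outset, use $h(\raisedchi)=h(X_n\mid X_{n-1},X_{n-2},\ldots)$ to rewrite each summand as a conditional mutual information, and telescope via the chain rule to $I(X_1,\ldots,X_N;X_0,X_{-1},\ldots)$; only a single monotone limit in $N$ remains, and your block-form identity $h(X_1,\ldots,X_n)-nh(\raisedchi)=I(X_1,\ldots,X_n;X_0,X_{-1},\ldots)$ makes this especially transparent.

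What each buys: the paper's approach avoids ever conditioning on an infinite $\sigma$-field and keeps all entropies finite-dimensional, at the cost of a two-parameter limit; however, its DCT step tacitly requires $\sum_i g(i)=E<\infty$, which is exactly what fails in the LRD case of interest (the argument is rescued because $f_n(i)$ is in fact monotone in $n$, so MCT applies and both sides diverge together). Your approach sidesteps this by working with mutual informations, which are monotone under enlargement of either argument and well defined via Radon--Nikodym derivatives regardless of whether the differential entropies themselves behave; the price is that you must justify the identity $nh(\raisedchi)=h(X_1,\ldots,X_n\mid X_0,X_{-1},\ldots)$, which you correctly flag as the one nontrivial measure-theoretic point.
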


\begin{proof}
	The mutual information for a process $X$, with a past and future of $n$ observations,
	\begin{align*}
		I[\{X_s, -n \le s < 0\}; \{X_s, 0 \le s \le n\}] &= h(X_0, ..., X_{n-1}) - h(X_0, ... , X_{n-1} | X_{-n}, ... , X_{-1}),\\
		&= \sum_{i=0}^{n-1} h(X_i | X_{i-1}, ... , X_0) - \sum_{i=0}^{n-1} h(X_i | X_{i-1}, ... , X_{-n}),\\
		&= \sum_{i=0}^{n-1} \bigg(h(X_i | X_{i-1}, ... , X_0) - h(X_i | X_{i-1}, ... , X_{-n})\bigg),
	\end{align*}
	by the chain rule of differential entropy~\cite[pg. 253]{cover_thomas_2006}.
	Then we consider the mutual information between past and future, by taking the limit of the above expression as $n \rightarrow \infty$, which leads to
	\begin{align*}
		I_\pf &= \lim\limits_{n \rightarrow \infty} \left[ \sum_{i=0}^{n-1} \bigg(h(X_i | X_{i-1}, ... , X_0) - h(X_i | X_{i-1}, ... , X_{-n})\bigg)\right],\\
		&= \lim\limits_{n \rightarrow \infty} \left[ \sum_{i=0}^{\infty} \bigg( h(X_i | X_{i-1}, ... , X_0) - h(X_i | X_{i-1}, ... , X_{-n}) \bigg)\mathbbm{1}_{\{i \le n\}}\right].
	\end{align*}
	We define the sequence of measurable functions, $f_n(i)$ as
	\begin{align*}
		f_n(i) &= \bigg( h(X_i | X_{i-1}, ... , X_0) - h(X_i | X_{i-1}, ... , X_{-n}) \bigg)\mathbbm{1}_{\{i \le n\}},
	\end{align*}
	and we define the function, $g(i)$ as
	\begin{align*}
		g(i) &= h(X_i | X_{i-1}, ... , X_0) - h(\raisedchi).
	\end{align*}
	We want to show that $|f_n(i)| \le g(i)$ for all $n$ and for all $i \in \mathbb{N}$. In this case it is equivalent to showing that $f_n(i) \le g(i)$, since $f_n(i) \ge 0$ for all $n,i \in \mathbb{N}$, as the second term of $f_n(i)$ conditions on more random variables, and since conditioning cannot increase entropy this implies that $h(X_i | X_{i-1}, ... , X_0) \ge h(X_i | X_{i-1}, ... , X_{-n})$. We consider two cases, $i \le n$ and $i > n$, separately. In the case, $i > n$, we have that $f_n(i) = 0$, and since $g(i) \ge 0$ for all $i$, this implies that $f_n(i) \le g(i)$. Considering the second case, $i \le n$, we have that
	\begin{align*}
		f_n(i) &= \bigg(h(X_i | X_{i-1}, ... , X_0) - h(X_i | X_{i-1}, ... , X_{-n})\bigg),
	\end{align*}
	and therefore,
	\begin{align*}
		g(i) - f_n(i) &= h(X_i | X_{i-1}, ... , X_{-n}) - h(\raisedchi).
	\end{align*}
	Again, since conditioning does not increase entropy and the characterisation of entropy rate for stationary processes from Theorem~\ref{conditional_entropy_rate} this implies that $g(i) - f_n(i) \ge 0$ and therefore $g(i) \ge f_n(i)$ for all $n,i$ such that $i \le n$. Then we can apply the dominated convergence theorem~\cite[pg. 26]{Durrett2010}, since $f_n(i) \rightarrow (X_i | X_{i-1}, ... , X_0) - h(\raisedchi)$ pointwise, this implies that
	\begin{align*}
		I_\pf &= \lim\limits_{n \rightarrow \infty} \sum_{i=0}^{\infty} \bigg( h(X_i | X_{i-1}, ... , X_0) - h(X_i | X_{i-1}, ... , X_{-n}) \bigg)\mathbbm{1}_{\{i \le n\}},\\
		&=  \sum_{i=0}^{\infty} \lim\limits_{n \rightarrow \infty} \bigg( h(X_i | X_{i-1}, ... , X_0) - h(X_i | X_{i-1}, ... , X_{-n}) \bigg)\mathbbm{1}_{\{i \le n\}},\\
		&= \sum_{i=0}^{\infty} \bigg( h(X_i | X_{i-1}, ... , X_0) - h(\raisedchi) \bigg).
	\end{align*}
\end{proof}

\begin{remark}
	This proof is similar to that of Proposition 8 from~\cite{crutchfield_feldman_2003}. However, it is more rigorous since the limit is kept out the front of the sum while simultaneously applied to the second term in the sum. This approach using dominated convergence can resolve the issue in their proof.
\end{remark}

\noindent In~\cite{crutchfield_feldman_2003}, they analyse the excess entropy of discrete random variables to understand the convergence rate of the conditional entropy to the entropy rate. From this link we can utilise our knowledge of the mutual information between the past and future to classify the rate of convergence of the conditional entropy to the entropy rate. 


This result and Theorem~\ref{mutual_infinite}, lead to the following corollary which gives us an approach to understand the entropy rate convergence by conditional entropy, of Gaussian LRD processes and is a generalisation of a theorem stated for LRD FGN processes (\cite{ding2016entropic}).

\begin{corollary}
	The excess entropy of an LRD Gaussian process is infinite.
\end{corollary}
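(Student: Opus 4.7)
The plan is to chain the two preceding results together: Theorem \ref{mutual_excess} identifies the differential excess entropy $E$ with the mutual information between past and future $I_\pf$ for any stationary continuous-valued stochastic process, and Theorem \ref{mutual_infinite} establishes that $I_\pf = \infty$ for every stationary LRD Gaussian process. Since LRD Gaussian processes are stationary and continuous-valued, both hypotheses apply simultaneously, so the conclusion follows immediately by transitivity: $E = I_\pf = \infty$.

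Concretely, I would first cite Theorem \ref{mutual_excess} to write $E = I_\pf$, noting that the hypothesis of stationarity needed there is part of the standing assumption for LRD Gaussian processes in this paper (the definitions of LRD in Section 2.1 are stated for stationary processes, and Gaussianity gives the continuous-valued requirement). Then I would invoke Theorem \ref{mutual_infinite} to assert $I_\pf = \infty$, and combine the two equalities.

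There is essentially no obstacle here, since all the analytic work (the spectral density singularity argument for the cepstrum coefficients $b_k$, and the dominated convergence argument identifying $I_\pf$ with the telescoping sum defining $E$) has already been discharged in the previous two theorems. The only thing worth being careful about is making sure that $E$ is interpreted in the extended sense $[0,\infty]$ so that the equality $E = I_\pf$ remains meaningful when both sides diverge; this is legitimate because each summand $h(X_n \mid X_{n-1},\ldots,X_1) - h(\raisedchi)$ in Definition \ref{def:differential_excess_entropy} is nonnegative (conditioning reduces entropy and $h(\raisedchi)$ equals the limit of these conditional entropies), so the series is a sum of nonnegative terms and its divergence to $+\infty$ is unambiguous.
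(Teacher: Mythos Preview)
Your proposal is correct and matches the paper's own proof, which simply states that the result follows by combining Theorem~\ref{mutual_infinite} and Theorem~\ref{mutual_excess}. Your additional remarks verifying the hypotheses and justifying the extended-value interpretation of $E$ are sound and in fact make the argument more careful than the paper's one-line version.
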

\begin{proof}
	This is shown by combining Theorem~\ref{mutual_infinite} and Theorem~\ref{mutual_excess}.
\end{proof}

\noindent We will use this idea in the subsequent sections to analyse the excess entropy, given its relationship to convergence to entropy rate which is noted for Shannon entropy rate by~\cite{crutchfield_feldman_2003}.

\section{Excess Entropy for Stationary Gaussian Processes}\label{excess_entropy}

In this section we investigate the behaviour of the excess entropy, $E$, for Gaussian processes which have an autocorrelation function which decays as a power law.

For stationary processes, the terms in the excess entropy will tend to zero as $n$ increases because
\begin{align*}
	\lim\limits_{n \rightarrow \infty} h_e(n) = \lim\limits_{n \rightarrow \infty} h(X_n | X_{n-1}, ... , X_1) = h(\raisedchi).
\end{align*}
However, we will investigate the nature of the convergence using the conditional entropy to gain some additional insight.

We begin by looking at the behaviour of the individual terms of the excess entropy series to understand why these terms decay to zero. 

Since $h(X_n ,... , X_1) = \frac{1}{2} \log\left((2 \pi e)^n |K^{(n)}| \right)$ for a finite collection of random variables of a Gaussian process~\cite[pg. 416]{cover_thomas_2006}, we have
\begin{align}
h_e(n) &= h(X_n ,... , X_1) - h(X_{n-1}, ... , X_1),\nonumber\\
&= \frac{1}{2} \log\left((2 \pi e)^n |K^{(n)}| \right) - \frac{1}{2} \log\left((2 \pi e)^{n-1} |K^{(n-1)}| \right),\nonumber\\
&= \frac{1}{2} \log\left(2 \pi e \frac{|K^{(n)}|}{|K^{(n-1)}|}\right),\label{eq:excess_entropy_term}
\end{align}
where $K^{(n)}$ is the autocovariance matrix of the process $X_n$. Note that $K^{(n)}$ is a Toeplitz matrix of size $n \times n$. 

We analyse each element of the infinite series of differential excess entropy by substituting the characterisation of $h_e(n)$ in~(\ref{eq:excess_entropy_term}) and utilising the entropy rate characterisation for Gaussian processes given in~(\ref{eqn:ihara}). This gives
\begin{align*}
h_e(n) - h(\raisedchi) &= \frac{1}{2} \log\left(2 \pi e \frac{|K^{(n)}|}{|K^{(n-1)}|}\right) - \frac{1}{2} \log (2 \pi e) - \frac{1}{4\pi} \int_{-\pi}^{\pi} \log f(\lambda) d\lambda,\\
&= \frac{1}{2} \log\frac{|K^{(n)}|}{|K^{(n-1)}|} - \frac{1}{4\pi} \int_{-\pi}^{\pi} \log f(\lambda) d\lambda. \numberthis \label{eqn:excess_ent_convergence}
\end{align*} 

\noindent The expression, (\ref{eqn:excess_ent_convergence}), informs us about the rate of convergence of the conditional entropy to the entropy rate, and about the general convergence of the excess entropy. If $\log f(\lambda)$ is integrable, there is a well known limit theorem by Szeg{\"o} (see~\cite{szego1915, bingham2012}), given below and used to evaluate the limit of $\frac{|K^{(n)}|}{|K^{(n-1)}|}$ as $n \rightarrow \infty$. 

First we show that $\log f(\lambda) \in L_1[-\pi, \pi]$, for LRD processes and hence that Szeg{\"o}'s limit theorem can be applied. We use the asymptotic form to analyse this case, as the issue with integrability is the singularity that exists at the origin of the spectral density functions. Therefore, the integrability of the asymptotic form implies the integrability for LRD processes.

\begin{theorem}
	For a spectral density, $f(\lambda)$ of a LRD stochastic process, $\log f(\lambda) \in L_1[-\pi, \pi]$.
\end{theorem}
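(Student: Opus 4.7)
The plan is to localize the only genuine obstruction to integrability, namely the singularity of $f$ at the origin, and to show that even near $\lambda=0$ the logarithm only diverges like $\log|\lambda|$, which is integrable. Away from the origin I will treat $f$ as a continuous positive function, so that $\log f$ is bounded there and contributes only a finite amount to $\int_{-\pi}^{\pi}|\log f(\lambda)|\,d\lambda$.

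First I would split the integral as
\begin{equation*}
\int_{-\pi}^{\pi}|\log f(\lambda)|\,d\lambda \;=\; \int_{|\lambda|\le \delta}|\log f(\lambda)|\,d\lambda \;+\; \int_{\delta<|\lambda|\le \pi}|\log f(\lambda)|\,d\lambda,
\end{equation*}
for some small $\delta>0$. On $\{\delta<|\lambda|\le\pi\}$ I would invoke continuity and positivity of $f$ (away from the origin) to conclude that $\log f$ is bounded on that compact set, so the second integral is automatically finite.

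For the dominant piece near the origin I would use the LRD asymptotic $f(\lambda)\sim c_f|\lambda|^{1-2H}$ established in the previous section. By the definition of the asymptotic equivalence, for any $\varepsilon>0$ there is a $\delta>0$ such that
\begin{equation*}
(1-\varepsilon)c_f|\lambda|^{1-2H} \;\le\; f(\lambda) \;\le\; (1+\varepsilon)c_f|\lambda|^{1-2H}, \qquad |\lambda|\le \delta,
\end{equation*}
which after taking $\log$ gives $|\log f(\lambda)| \le C + |1-2H|\,|\log|\lambda||$ for some constant $C$. Hence the first integral is bounded by a constant multiple of $\int_{-\delta}^{\delta}(1+|\log|\lambda||)\,d\lambda$, and the standard computation $\int_{0}^{\delta}|\log\lambda|\,d\lambda = \delta - \delta\log\delta<\infty$ closes the argument.

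The mild subtlety — really the only thing to be a little careful about — is that the asymptotic form is only guaranteed in a neighbourhood of $0$, so one has to justify pulling the bound outside the $\sim$ relation and confirm that the LRD range $H\in(0,1)$ (equivalently $1-2H\in(-1,1)$) keeps the singularity of $f$ itself integrable as well. Once that is noted, the $L_1$ statement reduces to the standard integrability of $\log|\lambda|$ on $[-\pi,\pi]$, and Szeg\H{o}'s limit theorem becomes applicable in the sequel.
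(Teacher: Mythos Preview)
Your argument is correct and follows essentially the same route as the paper: both localize the only obstruction to the origin, invoke the LRD asymptotic $f(\lambda)\sim c_f|\lambda|^{1-2H}$ there, and reduce the question to the elementary integrability of $\log|\lambda|$ on $[-\pi,\pi]$. Your version is in fact a bit more careful than the paper's, since you explicitly split off the region away from the origin (handled by continuity and positivity of $f$) and bound $|\log f|$ rather than merely computing $\int\log f$, which is what the $L_1$ conclusion actually requires.
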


\begin{proof}
	Using the asymptotic expression for spectral density for LRD processes,
	$f(\lambda) \sim c_f|\lambda|^{-\beta}, \text{  } \beta \in (0,1), \text{  } 0 < c_f < \infty$.\\
	Hence,
	\begin{align*}
	\int_{-\pi}^{\pi} \log(c_f|\lambda|^{-\beta}) d \lambda &= \int_{-\pi}^{\pi} \log c_f d \lambda - \beta \int_{-\pi}^{\pi} \log|\lambda| d \lambda,\\
	&= 2 \pi \log c_f - 2 \beta \int_{0}^{\pi} \log \lambda\ d\lambda,\\
	&= 2 \pi \log c_f - 2 \beta \left[\lambda \log \lambda - \lambda \right]_0^\pi,\\
	&= 2 \pi \log c_f - 2 \beta \left[\pi \log \pi - \pi \right],
	\end{align*}
	Since we have that $0\log 0 = 0$. By the finiteness of all of the terms, this implies that,
	\begin{align*}
	\int_{-\pi}^{\pi} \log f(\lambda) d \lambda < \infty.	
	\end{align*}
\end{proof}

The following theorem was originally formulated by~\cite{szego1915}, and then extended to include another term in~\cite{szego1952certain}. This theorem has a history of having the conditions generalised, and applications being found in many areas in functional analysis, statistics and probability~\cite[pg. 145-228]{grenander1958toeplitz}, to calculate functions of eigenvalues of Toeplitz matrices. The statement of the theorem that we will be using is by~\cite{bingham2012}, which is from the probabilistic perspective and includes the most recent generalisations in the conditions. 

\begin{theorem}[Szeg{\"o}'s Theorem (\cite{bingham2012})]\label{thm:szego_theorem}
	For a sequence of Toeplitz matrices, $\Gamma^{(n)} = [\gamma(|i-j|)]_{i,j}$, of increasing size $n$, where $\gamma(n) = \int_{-\pi}^{\pi} f(\lambda) e^{in\lambda} d \lambda$, and $f(\lambda)$ is a spectral density such that $\log f(\lambda) \in L_1[-\pi, \pi]$, then there exists a limit
	\begin{align*}
	\lim\limits_{n \to \infty} | \Gamma^{(n)} |^{\frac{1}{n}} = \exp \left\lbrace\frac{1}{2\pi}\int_{-\pi}^{\pi} \log f(\lambda) d\lambda \right\rbrace.
	\end{align*}
\end{theorem}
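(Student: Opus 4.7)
The plan is to reduce the determinantal limit to a limit of averages of eigenvalues of $\Gamma^{(n)}$, apply Szeg\"o's first distribution theorem in the bounded-symbol case, and extend to the general setting $\log f \in L_1[-\pi,\pi]$ via truncation and dominated convergence. First I would diagonalise: because $\Gamma^{(n)}$ is a real symmetric positive semi-definite matrix (it is the covariance matrix of a Gaussian process whose spectral density $f$ is nonnegative), its determinant equals the product of its eigenvalues $\lambda^{(n)}_1, \ldots, \lambda^{(n)}_n$, and hence
\[
\log |\Gamma^{(n)}|^{1/n} \;=\; \frac{1}{n} \sum_{k=1}^{n} \log \lambda^{(n)}_k,
\]
so the claim reduces to showing this arithmetic mean of $\log \lambda^{(n)}_k$ converges to $(2\pi)^{-1}\int_{-\pi}^{\pi}\log f(\lambda)\,d\lambda$.

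Next I would invoke the Szeg\"o eigenvalue distribution theorem: for any function $F$ continuous on an interval containing the essential range of $f$,
\[
\lim_{n \to \infty} \frac{1}{n}\sum_{k=1}^{n} F\bigl(\lambda^{(n)}_k\bigr) \;=\; \frac{1}{2\pi}\int_{-\pi}^{\pi} F\bigl(f(\lambda)\bigr)\,d\lambda.
\]
This distributional result is the engine of the argument; it is customarily proved by first establishing the identity for monomials $F(x)=x^j$ (where both sides reduce to explicit traces computable via circulant approximations of $\Gamma^{(n)}$) and then extended to arbitrary continuous $F$ by Stone--Weierstrass. When $f$ is bounded above and below by positive constants, choosing $F=\log$ is permissible on the essential range of $f$ and the claim follows at once.

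To treat the general case in which $f$ may vanish or blow up but $\log f \in L_1[-\pi,\pi]$, I would introduce the two-sided truncation $f_{\epsilon,M}(\lambda) = \max\bigl(\min(f(\lambda), M), \epsilon\bigr)$ for $0 < \epsilon < M < \infty$, apply the previous step to each $f_{\epsilon,M}$, and then let $\epsilon \downarrow 0$ and $M \uparrow \infty$. On the right-hand side, $\frac{1}{2\pi}\int \log f_{\epsilon,M}\,d\lambda \to \frac{1}{2\pi}\int \log f\,d\lambda$ by dominated convergence with $|\log f|$ as the envelope. On the left, the minimax characterisation of eigenvalues gives the monotonicity $\lambda^{(n)}_k(f_1) \le \lambda^{(n)}_k(f_2)$ whenever the symbols satisfy $f_1 \le f_2$, allowing the eigenvalue averages of $f$ to be sandwiched between those of its truncations. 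The main obstacle is controlling the tails of the eigenvalue distribution — those eigenvalues of $\Gamma^{(n)}$ close to $0$ (respectively $\infty$) — uniformly in $n$, so that the limit $n \to \infty$ commutes with $\epsilon \downarrow 0$ (respectively $M \uparrow \infty$); this is precisely where the hypothesis $\log f \in L_1[-\pi,\pi]$ is indispensable, as it supplies the integrable envelope needed to rule out any accumulation of mass in the logarithmic tails.
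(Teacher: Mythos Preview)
The paper does not supply a proof of this theorem: it is quoted as a classical result, attributed to Szeg{\"o} and stated in the form given by \cite{bingham2012}, and then immediately applied. There is therefore no in-paper argument to compare your proposal against.

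That said, your outline is the standard Grenander--Szeg{\"o} route and is sound in structure. The reduction to an eigenvalue average, the moment method (monomials plus Stone--Weierstrass) for the distribution theorem in the bounded-symbol case, and the monotonicity of eigenvalues in the symbol via the minimax principle are all correct ingredients. The genuinely delicate step is the one you flag yourself: passing the limit $n\to\infty$ through the truncation $\epsilon\downarrow 0$, $M\uparrow\infty$. The sandwich from monotonicity gives one-sided bounds, but to close the argument you need a quantitative estimate showing that the contribution of eigenvalues below $\epsilon$ (or above $M$) to $\tfrac{1}{n}\sum\log\lambda_k^{(n)}$ is uniformly small in $n$; merely invoking ``dominated convergence with $|\log f|$ as envelope'' handles the right-hand side but not the left. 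In the classical treatments this is where one uses, for instance, a counting estimate for small eigenvalues derived from the trace of $(\Gamma^{(n)})^{-1}$ or an appeal to the variational characterisation together with the integrability of $\log f$. Your sketch identifies the obstacle correctly but does not yet indicate which of these devices you intend to use; filling that in is the substantive remaining work.
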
 

\noindent We apply this theorem to the ratios of the autocovariance function, $K^{(n)}$, of the process, $\{X_n\}_{n \in \mathbb{N}}$. Due to the finiteness of this limit we can take the $n$th power, for a covariance matrix $K^{(n)}$. Which has the following form,
\begin{align*}
\lim\limits_{n \to \infty} | K^{(n)} | &= \exp \left\lbrace\frac{1}{2\pi}\int_{-\pi}^{\pi} \log f(\lambda) d\lambda \right\rbrace^n,
\end{align*}
taking the $n$th power. Applying this result to the limit of the $n$th term of the differential excess entropy (\ref{eqn:excess_ent_convergence}), gives
\begin{align*}
\lim\limits_{n \to \infty} h_e(n) - h(\raisedchi) 
&=  \frac{1}{2} \log \left( \exp \left\lbrace\frac{1}{2\pi}\int_{-\pi}^{\pi} \log f(\lambda) d\lambda \right\rbrace\right) - \frac{1}{4\pi} \int_{-\pi}^{\pi} \log f(\lambda) d\lambda,\\
&= 0.
\end{align*}

\noindent As $n \to \infty, h_e(n) - h(\raisedchi) \to 0$, which implies the convergence of the conditional entropy, conditioned on the infinite past, to the entropy rate, equivalent to the result of Theorem~\ref{conditional_entropy_rate}. We have gained some additional insight that the summands converge to zero in the limit as $n \rightarrow \infty$ due to the convegence of the ratio of subsequent covariance matrix determinants. However, this does not tell us anything about the rate of convergence of the terms or whether $E$ converges at all, but this approach can be extended to garner such information.

In following section we will use a stronger version of the Szeg{\"o} theorem, with an additional term in the limit. This gives an approach to analyse the convergence properties to the entropy rate that arise from observing the conditional entropy, conditioned on the past observations.



\section{Entropy rate convergence for LRD processes}

From the previous sections we have gained an understanding of some of the properties of the entropy rate function for common LRD models. In this section, we classify the convergence rate of the conditional entropy to the entropy rate for SRD and LRD processes.

In Section~\ref{excess_entropy} we used Szeg{\"o}'s theorem to show,
\begin{align*}
\lim\limits_{n \to \infty} \frac{|K^{(n)}|}{|K^{(n-1)}|} = \exp\left(\frac{1}{2\pi} \int_{-\pi}^{\pi} \log f(\lambda) d\lambda\right),
\end{align*}
if $\log f(\lambda) \in L_1[-\pi, \pi]$. This provides another perspective to explain why the conditional entropy converges to the entropy rate. The Strong Szeg{\"o} Theorem, provides an additional term to this limit with the same regularity conditions. The Szeg{\"o} theorem showed why the convergence occurred for the conditional entropy, and the strong Szeg{\"o} theorem we explain the convergence rate. We will state the version as given in~\cite{bingham2012}, which is more suited to probabilistic analysis.

For ease of notation we define the limit of the Szeg{\"o} theorem, 
\begin{align*}
G(\mu) := \exp\left( \frac{1}{2} \int_{-\pi}^{\pi} \log f(\lambda) d\lambda\right),
\end{align*}
and the partial autocorrelation coefficients, $\alpha_n$,  as,
\begin{align*}
\alpha_n = corr(X_n - P_{[1, n-1]}X_n, X_0 - P_{[1, n-1]}X_0),
\end{align*}
where $P_{[1, n-1]}$ is the projection onto the linear space spanned by $\{X_{-n}, ... , X_{-1}\}$ and the correlation function is defined as,
\begin{align*}
corr(X,Y) = \frac{\mathds{E}[X\bar{Y}]}{\sqrt{\mathds{E}[|X|^2]\mathds{E}[|Y|^2]}}, 
\end{align*}
for $X$ and $Y$ zero mean random variables. Note that $P_{[1, n-1]}X_0$ is the best linear predictor  of $X_0$ given the finite past of length $n$. 

The partial autocorrelation function is related to the autocorrelation function by the removal of the linear dependence on the variables within n lags. For example, in the case of finite lag processes, such as AR(p), the partial correlation function is 0 for a lag greater than p. In the case of the ARFIMA(p,d,q) process, the decay is slower for LRD parametrisations (\cite{inoue2002asymptotic}).

We define the Hardy space $\mathcal{H}^{\frac{1}{2}}$, which is the subspace of $\ell_2$ of sequences $a = (a_n)$ such that $||a||^2 := \sum_{n} (1 + |n|)|a_n|^2 < \infty$, which is a well defined norm on $\ell_2$. We use this to describe the conditions on the strong Szeg{\"o} theorem. Also, the cepstral coefficients that were defined in Theorem~\ref{Li_thm_1} will be used in the definition of the additional term of the Szeg{\"o} limit theorem, reinforcing their connection with the information theoretic perspective of stochastic processes.

\begin{theorem}[Strong Szeg{\"o} Theorem (\cite{bingham2012})]
	For a sequence of Toeplitz matrices, $\Gamma^{(n)} = [\gamma(|i-j|)]_{i,j}$, of increasing size $n$, for covariance function $\gamma(n)$, with associated spectral density $f(\lambda)$, such that $\log f(\lambda) \in L_1[-\pi, \pi]$, then 
	\begin{align*}
		\lim\limits_{n \to \infty} \frac{|\Gamma^{(n)}|}{G(\mu)^n} \to E(\mu),
	\end{align*}
	where,
	\begin{align*}
		E(\mu) := \prod_{j=1}^{\infty} (1 - |\alpha_j|^2)^{-j} = \exp\left(\sum_{k=1}^\infty k b_k^2\right),
	\end{align*}
	noting that these expressions may be infinite. The infinite product converges if and only if
	\begin{itemize}
		\item The strong Szeg{\"o} condition, $\alpha \in \mathcal{H}^{\frac{1}{2}}$ holds; or,
		\item The sum of cepstral coefficients $\sum_{k=1}^\infty k b_k^2 \in \mathcal{H}^{\frac{1}{2}}$.
	\end{itemize}
\end{theorem}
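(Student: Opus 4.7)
The plan is to prove the strong Szegő theorem via the theory of orthogonal polynomials on the unit circle (OPUC). The partial autocorrelations $\alpha_n$ appearing in the statement coincide (up to sign convention) with the Verblunsky coefficients of the spectral measure $d\mu = f(\lambda) d\lambda/(2\pi)$, so OPUC is the natural framework in which both the product form and the cepstral form of $E(\mu)$ arise together.

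First I would establish an algebraic identity expressing $|\Gamma^{(n)}|$ in terms of the Verblunsky coefficients. Let $\Phi_j$ denote the monic Szegő polynomials of degree $j$ associated with $\mu$, and write $\eta_j := \int |\Phi_j|^2 d\mu$. A standard Gram-determinant argument gives $|\Gamma^{(n)}| = \prod_{k=0}^{n-1}\eta_k$, and the Szegő recursion yields the telescoping relation $\eta_k = \eta_{k-1}(1-|\alpha_{k-1}|^2)$. Combining these and reorganising the resulting double product by the multiplicity of each factor gives
$$|\Gamma^{(n)}| = \gamma(0)^n \prod_{j=0}^{n-2}(1 - |\alpha_j|^2)^{n-1-j}.$$

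Second, I would compare this with $G(\mu)^n$. The standard Szegő theorem (already invoked in the preceding section) identifies $G(\mu)$ with the geometric mean of $f$, and via the Cesàro average of $\log(1 - |\alpha_j|^2)$ also with a limit built from the Verblunsky coefficients. Dividing out $G(\mu)^n$ cancels the bulk contribution in the identity above, and the surviving tail, after reindexing, becomes $\prod_{j=1}^{\infty}(1-|\alpha_j|^2)^{-j}$. This recovers the first expression for $E(\mu)$ and makes transparent that convergence of the ratio is equivalent to summability of $\sum_j j|\alpha_j|^2$, that is to the strong Szegő condition $\alpha \in \mathcal{H}^{1/2}$.

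Third, to recover the cepstral representation I would introduce the Szegő outer function $D(z) = \exp\bigl(\frac{1}{4\pi}\int_{-\pi}^{\pi}\frac{e^{i\lambda}+z}{e^{i\lambda}-z}\log f(\lambda)\,d\lambda\bigr)$, which is well defined because $\log f \in L^1[-\pi,\pi]$ and satisfies $|D(e^{i\lambda})|^2 = f(\lambda)$ almost everywhere. The Taylor coefficients of $\log D$ at the origin are (proportional to) the cepstral coefficients $b_k$, and an identity due to Ibragimov gives $\sum_{j\ge 1} j|\alpha_j|^2 = \sum_{k\ge 1} k|b_k|^2$. Exponentiating the product form of $E(\mu)$, expanding $\log(1-|\alpha_j|^2) = -|\alpha_j|^2 + O(|\alpha_j|^4)$, and applying Ibragimov's identity then yields $E(\mu) = \exp\bigl(\sum_k k b_k^2\bigr)$, from which the two equivalent summability conditions in the bullet points follow immediately.

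The hard part will be the Ibragimov identity linking Verblunsky coefficients to cepstral coefficients, since this is the genuinely analytic step: the Szegő function $D(z)$ must be constructed and its boundary behaviour controlled under the weak regularity hypothesis $\log f \in L^1$, and the quadratic sum $\sum j|\alpha_j|^2$ must be matched to $\sum k b_k^2$ through the spectral factorisation of $f$. Controlling the $O(|\alpha_j|^4)$ error when passing between the product and exponential forms is more routine, as those remainders are absolutely summable under the strong Szegő condition and so contribute only a bounded multiplicative correction.
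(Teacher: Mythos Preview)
The paper does not prove this statement at all: the Strong Szeg\H{o} Theorem is quoted from \cite{bingham2012} as a known result and is then applied as a tool in the subsequent convergence-rate theorems. There is therefore no ``paper's own proof'' to compare your proposal against.

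On the substance of your sketch: the OPUC route you outline is indeed the standard modern proof, and the first two steps (the Gram--determinant identity $|\Gamma^{(n)}| = \prod_{k=0}^{n-1}\eta_k$, the Szeg\H{o} recursion $\eta_k = \eta_{k-1}(1-|\alpha_{k-1}|^2)$, and the reorganisation into $\prod_j(1-|\alpha_j|^2)^{-j}$ after dividing by $G(\mu)^n$) are correct and cleanly give the product form of $E(\mu)$ together with the equivalence to $\alpha\in\mathcal{H}^{1/2}$.

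The third step, however, is where the real content of the Strong Szeg\H{o} Theorem lies, and your treatment is somewhat circular. What you call ``Ibragimov's identity'', namely $\sum_j j|\alpha_j|^2 = \sum_k k b_k^2$, is not a separate algebraic fact that one proves first and then invokes; the exact equality $-\sum_j j\log(1-|\alpha_j|^2) = \sum_k k b_k^2$ \emph{is} the analytic heart of the strong theorem. Your plan to expand $\log(1-|\alpha_j|^2) = -|\alpha_j|^2 + O(|\alpha_j|^4)$ and then appeal to the identity at leading order leaves the higher-order remainder unaccounted for in the equality (not merely in the convergence), so as written the argument would only show that the two expressions for $E(\mu)$ agree up to a bounded multiplicative factor, not that they are equal. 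A genuine proof of this step requires either Ibragimov's original argument via the Helton--Howe/Widom trace formula, the Golinskii--Ibragimov approach, or Simon's proof through the relative Szeg\H{o} function, each of which is substantially more work than the outer-function construction you describe.
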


\noindent Using this theorem we will be able to characterise the convergence rate of the short range dependent processes that we have been considering in this paper.

The SRD and CSRD versions of ARFIMA and Fractional Gaussian Noise meet the conditions of the convergence of the infinite product and sum of Strong Szeg{\"o} Theorem. This is shown by Theorem~\ref{Li_thm_1}. For a positive, continuous spectral density  $\sum_{k=1}^{\infty} k \gamma(k)^2  < \infty$ if and only if  $I_{p-f} < \infty$, which in turn holds if and only if $\sum_{k=-\infty}^{\infty} |k||b_k|^2 < \infty$. In the proof of Theorem~\ref{mutual_infinite}, we showed that the boundary of the finiteness and infiniteness of $\sum_{k=1}^{\infty} kb_k^2$, coincided with the boundary between SRD and LRD processes. For example, for ARFIMA and Fractional Gaussian Noise, $\sum_{k=1}^{\infty} k \gamma(k)^2 < \infty$ when $H \le \frac{1}{2}$ and therefore by Theorem~\ref{mutual_infinite}, $\sum_{k=1}^{\infty} |k||b_k|^2 < \infty$.

One additional note about the conditions of the infinite product and sum in this theorem, the conditions for $\ell_1$ and $H < \frac{1}{2}$, coincide for sequences that decay as power laws, \ie $n^\alpha$, which is common when considering the convergence or divergence of sequences of LRD processes. However, it may be the case that a process may be in $\ell_1$ and not in $\mathcal{H}^\frac{1}{2}$, \emph{e.g.}, $a_n = \frac{1}{n\log n}$ (\cite{bingham2012}). 

Before we continue, we require a lemma to prove a theorem about the convergence rate of the differential entropy rate of SRD and LRD processes, which gives two different determinant limits for the results of the Szeg{\"o} limit theorems.

\begin{lemma}\label{determinant_ratio}
	For all discrete-time stationary Gaussian processes, 
	\begin{align*}
	\lim\limits_{n \to \infty} \left(|K^{(n)}| \right)^{\frac{1}{n}} &= \lim\limits_{n \to \infty} \frac{|K^{(n)}|}{|K^{(n-1)}|},\\
	\end{align*}
	where $K^{(n)}$ is the $n \times n$ autocovariance matrix of the process.
\end{lemma}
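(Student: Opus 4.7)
The plan is to pass to logarithms and reduce the claim to a Cesàro-mean statement about the one-step prediction variances. Define $v_n := |K^{(n)}|/|K^{(n-1)}|$, with the convention $|K^{(0)}|=1$. A standard Schur-complement identity for block determinants (equivalently, the $LDL^\top$ factorisation of the positive semidefinite matrix $K^{(n)}$) identifies $v_n$ with the one-step linear prediction error variance, namely $v_n = \operatorname{Var}(X_n \mid X_1,\dots,X_{n-1})$ for the centred Gaussian process. Telescoping then gives $|K^{(n)}| = \prod_{k=1}^{n} v_k$, so that $|K^{(n)}|^{1/n}$ is the geometric mean of $v_1,\dots,v_n$, while $|K^{(n)}|/|K^{(n-1)}|$ is simply the last term $v_n$.

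\textbf{Step 1: the ratio converges.} By stationarity, $v_n$ equals $\operatorname{Var}(X_0 \mid X_{-(n-1)},\dots,X_{-1})$. Since conditioning on additional variables cannot increase variance, $(v_n)$ is non-increasing, and being non-negative it converges to some $L \in [0,\infty)$. This establishes the existence of the right-hand limit in the lemma.

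\textbf{Step 2: Cesàro transfer.} Taking logarithms,
\begin{align*}
\log |K^{(n)}|^{1/n} = \frac{1}{n}\sum_{k=1}^n \log v_k,
\end{align*}
which is the arithmetic mean of the sequence $(\log v_k)$. Since $v_k \to L$, we have $\log v_k \to \log L$ in the extended real line (interpreting $\log 0 = -\infty$). The Cesàro mean theorem — which remains valid when the limit is $-\infty$ — then yields $\frac{1}{n}\sum_{k=1}^n \log v_k \to \log L$. Exponentiating gives $|K^{(n)}|^{1/n} \to L$, matching the ratio limit.

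\textbf{Main obstacle.} The one analytical point that needs care is the degenerate case $L=0$, where both sides vanish and the straightforward logarithmic manipulation must be replaced by the extended-reals version of Cesàro convergence (or, equivalently, by a direct $\varepsilon$-argument showing $v_k \le \varepsilon$ eventually forces $|K^{(n)}|^{1/n} \le \varepsilon^{1/2}$ for $n$ large). Everything else — the monotonicity from conditional-variance inequalities and the identification of $v_n$ with a Schur complement — is essentially bookkeeping.
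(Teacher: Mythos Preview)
Your argument is correct, but it proceeds along a different route than the paper's. The paper does not work with Schur complements or Ces\`aro means directly; instead it invokes Theorem~\ref{conditional_entropy_rate} (the Cover--Thomas equivalence $\lim_n \tfrac{1}{n} h(X_1,\dots,X_n) = \lim_n h(X_n\mid X_{n-1},\dots,X_1)$ for stationary processes), plugs in the Gaussian joint-entropy formula $h(X_1,\dots,X_n)=\tfrac{1}{2}\log\big((2\pi e)^n |K^{(n)}|\big)$ on each side, and cancels the $2\pi e$ factors to obtain the determinant identity. Your approach unpacks the content of that entropy-rate theorem at the level of covariance matrices: the identification $v_n = \operatorname{Var}(X_n\mid X_1,\dots,X_{n-1})$ via Schur complements is exactly the Gaussian translation of $h_e(n)$, the monotonicity of $v_n$ is the Gaussian translation of ``conditioning reduces entropy,'' and the Ces\`aro step is precisely the mechanism behind the Cover--Thomas proof. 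So the two arguments are morally the same, but yours is self-contained and purely linear-algebraic, avoiding the entropy detour and the need to cite an external theorem; it also makes the degenerate case $L=0$ explicit, which the paper's version silently absorbs into the cited result. The paper's version is shorter because it outsources the work, and it keeps the information-theoretic interpretation visible.
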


\begin{proof}
	For discrete-time stationary Gaussian processes, 
	\begin{align*}
	\lim\limits_{n \to \infty} \frac{1}{n} h(X_1, .. , X_n) &= \lim\limits_{n \to \infty} h(X_n|X_{n-1}, ... , X_1),
	\end{align*}
	by Theorem~\ref{conditional_entropy_rate} and the fact that both of the limits exist for stationary processes. The form of the joint entropy of a multivariate Gaussian random vector of length $n$~\cite[pg. 249]{cover_thomas_2006} is
	\begin{align*}
		h(X_1, .. , X_n) = \frac{1}{2} \log\left(2 \pi e |K^{(n)}|\right).
	\end{align*}
	Therefore as,
	\begin{align*}
		\lim\limits_{n \to \infty} \frac{1}{n} h(X_1, .. , X_n) &= \lim\limits_{n \to \infty} \frac{1}{n} \left(\frac{1}{2} \log\left((2 \pi e)^n |K^{(n)}|\right)\right),\\
		&= \lim\limits_{n \to \infty} \frac{1}{2} \log\left(2 \pi e |K^{(n)}|^{\frac{1}{n}}\right).
	\end{align*}
	Equating this for the entropy rate characterised using conditional entropy we get
	\begin{align*}
	\lim\limits_{n \to \infty} \frac{1}{2} \log\left(2 \pi e |K^{(n)}|^{\frac{1}{n}}\right)
	 &= \lim\limits_{n \to \infty} h(X_n, ... , X_1) - h(X_{n-1}, ... , X_1),\\
	&= \lim\limits_{n \to \infty} \frac{1}{2} \log\left((2 \pi e)^n |K^{(n)}|\right) - \frac{1}{2} \log\left((2 \pi e)^{n-1} |K^{(n-1)}|\right),\\ 
	&= \lim\limits_{n \to \infty} \frac{1}{2} \log\left((2 \pi e) \frac{|K^{(n)}|}{|K^{(n-1)}|}\right),
	\end{align*}
	Hence, we conclude that,
	\begin{align*}
	\lim\limits_{n \to \infty} (|K^{(n)}|)^{\frac{1}{n}} &= \lim\limits_{n \to \infty} \frac{|K^{(n)}|}{|K^{(n-1)}|}.
	\end{align*}
\end{proof}


\noindent Now we can characterise the convergence rate of the conditional entropy of SRD ARFIMA and Fractional Gaussian Noise processes to the differential entropy rate. Then we will show that a slower decay exists in the case of long range dependence.

\begin{theorem}\label{srd_convergence}
	For all discrete-time Gaussian SRD processes, such that the autocovariance function $\gamma \in \ell_1$, the convergence rate of the conditional entropy, $h(X_n|X_{n-1}, ..., X_1)$, to the differential entropy rate of the process is $O(n^{-1})$.
\end{theorem}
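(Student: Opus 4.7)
Write $\sigma_n^2 := |K^{(n)}|/|K^{(n-1)}|$ for the one-step-ahead linear prediction error variance. Combining the Gaussian identity $h_e(n) = \frac{1}{2}\log(2\pi e \sigma_n^2)$ with Kolmogorov's entropy rate formula and Szeg{\"o}'s theorem (Theorem~\ref{thm:szego_theorem}), we may write $h(\raisedchi) = \frac{1}{2}\log(2\pi e\, G)$ where $G = \lim_{n\to\infty} \sigma_n^2$ is the geometric-mean limit. Thus
\begin{equation*}
h_e(n) - h(\raisedchi) = \tfrac{1}{2}\log(\sigma_n^2/G),
\end{equation*}
and showing this difference is $O(n^{-1})$ reduces to controlling how fast $\sigma_n^2$ approaches $G$ from above. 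The plan is to replace this logarithmic gap by a tail sum of squared partial autocorrelations $\alpha_j$, and then exploit the Strong Szeg{\"o} condition $\sum_j j|\alpha_j|^2 < \infty$ via a Kronecker-type tail estimate.

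First, I would use the Verblunsky/Levinson factorisation $\sigma_n^2 = \gamma(0)\prod_{j=1}^{n-1}(1-|\alpha_j|^2)$ (where $\alpha_j$ are the partial autocorrelations defined just before the Strong Szeg{\"o} theorem, so that $\sigma_{n+1}^2 = \sigma_n^2(1-|\alpha_n|^2)$). Passing to the limit gives $G = \gamma(0)\prod_{j=1}^{\infty}(1-|\alpha_j|^2)$, so telescoping the two products yields
\begin{equation*}
h_e(n) - h(\raisedchi) = -\tfrac{1}{2}\sum_{j=n}^{\infty} \log(1-|\alpha_j|^2).
\end{equation*}
Since $\sum_j j|\alpha_j|^2 < \infty$ in particular forces $|\alpha_j| \to 0$, one has $-\log(1-|\alpha_j|^2) \le 2|\alpha_j|^2$ for $j$ large, and the proof is reduced to bounding $\sum_{j \ge n} |\alpha_j|^2$. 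A one-line Kronecker-type argument closes the gap:
\begin{equation*}
\sum_{j=n}^{\infty} |\alpha_j|^2 \;=\; \sum_{j=n}^{\infty} \frac{1}{j}\bigl(j|\alpha_j|^2\bigr) \;\le\; \frac{1}{n}\sum_{j=n}^{\infty} j|\alpha_j|^2 \;\le\; \frac{C}{n},
\end{equation*}
the last inequality using that $\sum_j j|\alpha_j|^2$ is finite by the Strong Szeg{\"o} condition. Combining gives $h_e(n) - h(\raisedchi) \le (C/n) = O(n^{-1})$, as required.

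The step I expect to carry the real weight --- and which should be flagged explicitly in the write-up --- is justifying that the Strong Szeg{\"o} condition $\alpha \in \mathcal{H}^{1/2}$ is actually a consequence of the SRD hypothesis $\gamma \in \ell_1$. The natural route is the chain of equivalences already laid out in Theorem~\ref{Li_thm_1} and the paragraph preceding the present theorem: $\gamma \in \ell_1$ makes $f$ continuous (and we further assume $f > 0$), and in the SRD regime the sum $\sum_k k\gamma(k)^2$ is finite, hence (first item of Theorem~\ref{Li_thm_1}) $I_\pf < \infty$, equivalently (second item) $\sum_k k b_k^2 < \infty$, which is precisely the Strong Szeg{\"o} condition. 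Because bare $\ell_1$ summability of $\gamma$ does not tautologically entail $\sum_k k\gamma(k)^2 < \infty$, I would either make the standing assumption that $f$ is positive and continuous (as is implicit for the FGN/ARFIMA-style SRD processes of the paper) and invoke Theorem~\ref{Li_thm_1} directly, or appeal to a Baxter-type inequality bounding $|\alpha_j|$ by the tail of $\gamma$, thereby transferring the summability from $\gamma$ to $\alpha$ without going through the cepstral coefficients.
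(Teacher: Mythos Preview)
Your argument is correct and takes a genuinely different route from the paper's. The paper works with the cepstral form of the Strong Szeg{\"o} theorem: from $|K^{(n)}|/G(\mu)^n \to E(\mu)=\exp(\sum_k kb_k^2)$ it takes $n$-th roots to obtain $|K^{(n)}|^{1/n}\sim G(\mu)E(\mu)^{1/n}$, and then invokes Lemma~\ref{determinant_ratio} (which only asserts equality of the two limits) to transfer this to the ratio $|K^{(n)}|/|K^{(n-1)}|\sim G(\mu)\exp(\tfrac{1}{n}\sum_k kb_k^2)$, whence the $O(n^{-1})$ rate. You instead use the Verblunsky/Levinson factorisation $\sigma_n^2=\gamma(0)\prod_{j<n}(1-|\alpha_j|^2)$ to write the gap as an \emph{exact} tail, $h_e(n)-h(\raisedchi)=-\tfrac{1}{2}\sum_{j\ge n}\log(1-|\alpha_j|^2)$, and close with the elementary Kronecker bound $\sum_{j\ge n}|\alpha_j|^2\le n^{-1}\sum_j j|\alpha_j|^2$. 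Your approach buys two things: it sidesteps the paper's heuristic passage from $n$-th-root asymptotics to ratio asymptotics (equality of limits does not in general transfer convergence rates), and it delivers a non-asymptotic inequality rather than a $\sim$ relation. The cost is that you must import the Verblunsky recursion, which the paper does not state explicitly, though it is latent in the product form $E(\mu)=\prod_j(1-|\alpha_j|^2)^{-j}$ quoted in the Strong Szeg{\"o} statement. Your flagged caveat---that bare $\gamma\in\ell_1$ does not tautologically yield $\alpha\in\mathcal{H}^{1/2}$---is well taken and applies equally to the paper's proof; both arguments ultimately lean on the discussion preceding the theorem linking the SRD regime of the FGN/ARFIMA class to finiteness of $\sum_k kb_k^2$ via Theorem~\ref{Li_thm_1}.
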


\begin{proof}
	
We rearrange the asymptotic expression of Szeg{\"o}'s strong theorem to get
\begin{align*}
\frac{|K^{(n)}|}{G(\mu)^n} &\sim E(\mu),\\
\implies |K^{(n)}|^{\frac{1}{n}} &\sim G(\mu)E(\mu)^{\frac{1}{n}}.
\end{align*}

Then we use the asymptotic form of the determinant limit from Lemma~\ref{determinant_ratio},
\begin{align*}
\frac{|K^{(n)}|}{|K^{(n-1)}|} &\sim G(\mu) \exp\left(\sum_{k=1}^\infty k b_k^2\right)^{\frac{1}{n}}, \\
\implies \frac{|K^{(n)}|}{|K^{(n-1)}|} &\sim G(\mu) \exp\left(\frac{1}{n} \sum_{k=1}^\infty k b_k^2\right).
\end{align*}

\noindent From the asymptotic form of the Szeg{\"o} theorem, Theorem~\ref{thm:szego_theorem}, the LHS will converge to $G(\mu)$ as $n$ increases. This implies that the convergence rate of the conditional entropy is controlled by the term, $\exp(\frac{1}{n} \sum_{k=1}^\infty k b_k^2)$, and its convergence rate to 1.  That is, $n \to \infty$
\begin{align*}
\left|G(\mu) \exp\left(\frac{1}{n} \sum_{k=1}^\infty k b_k^2\right) - G(\mu)\right| &= G(\mu)\left|\exp\left(\frac{1}{n} \sum_{k=1}^\infty k b_k^2\right) - \exp(0)\right|,\\
&= G(\mu).
\end{align*}
Since $\frac{1}{n} \sum_{k=1}^\infty k b_k^2 \to 0$, as $n \rightarrow \infty$.

Since $\sum_{k=1}^\infty k b_k^2$ is finite for $H \le \frac{1}{2}$, as $\gamma \in \ell_1$, this implies that the convergence is at the rate of $O(n^{-1})$ as
\begin{align*}
	\left|\exp\left(\frac{1}{n} \sum_{k=1}^\infty k b_k^2\right) - \exp(0)\right| &\to 0,\\
	\implies \left|\frac{1}{n} \sum_{k=1}^\infty k b_k^2 - 0\right| &\to 0,\\
	\text{and, } \left|\frac{1}{n} \sum_{k=1}^\infty k b_k^2\right| &\le \frac{C}{n}, C \in \mathbb{R^+}, \forall n \in \mathbb{Z^+}.
\end{align*}
\noindent For $C < \infty$, such that $C \ge \sum_{k=1}^\infty k b_k^2$.
\end{proof}

The key part of the proof that allows us to characterise the convergence rate of the conditional entropy is that the series, $\sum_{k=1}^\infty k b_k^2$, is finite. By Theorem~\ref{Li_thm_1}, we know that this is the boundary between LRD and SRD processes. This behaviour intuitively indicates that a limit would take a longer time to converge in the LRD case, similar to many estimators for LRD processes. Hence, we expect that the convergence to the entropy rate from the conditional entropy rate will be slower for LRD processes. This leads to the following theorem and characterisation of LRD processes through convergence of conditional entropy.




\begin{theorem}
	For all Gaussian LRD processes the convergence rate of the conditional entropy, $h(X_n|X_{n-1}, ..., X_1)$, to the differential entropy rate of the process is
	\begin{align*}
		O(\frac{\log(n^{(1-2H)^2})}{n}).
	\end{align*}
\end{theorem}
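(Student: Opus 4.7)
The plan is to mirror the SRD proof of Theorem \ref{srd_convergence} as closely as possible, working from the determinant-ratio identity
\[
\frac{|K^{(n)}|}{|K^{(n-1)}|} \sim G(\mu) \exp\!\left(\frac{1}{n}\sum_{k=1}^{\infty} k b_k^2\right),
\]
and using the asymptotic expression $b_k \sim -(1-2H)\,\mathrm{Si}(\pi)/(\pi k)$ derived inside the proof of Theorem \ref{mutual_infinite}. In the SRD case the inner sum was finite and the $1/n$ factor alone drove convergence; in the LRD case this sum diverges, and the whole idea is to track the diverging piece quantitatively so that it still goes to zero after the $1/n$ prefactor is applied.

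First, I would replace the infinite sum $\sum_{k=1}^{\infty} k b_k^2$ inside the exponential by its truncation at the matrix dimension, i.e.\ $\sum_{k=1}^{n} k b_k^2$. Intuitively, the Toeplitz matrix $K^{(n)}$ only ``sees'' cepstral coefficients up to roughly the $n$-th, so for LRD symbols with a power-law singularity at the origin the subleading factor $E(\mu)$ in the Strong Szeg\H{o} limit is replaced by a Fisher--Hartwig-type factor that grows polynomially in $n$. Concretely, substituting the partial-sum asymptotics already established in the proof of Theorem \ref{mutual_infinite},
\[
\sum_{k=1}^{n} k b_k^2 \sim \frac{(1-2H)^2\,\mathrm{Si}(\pi)^2}{\pi^2}\,\log n,
\]
into the exponential yields
\[
\frac{|K^{(n)}|}{|K^{(n-1)}|} \sim G(\mu)\,\exp\!\left(\frac{(1-2H)^2\,\mathrm{Si}(\pi)^2}{\pi^2}\cdot\frac{\log n}{n}\right).
\]

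From here the calculation is routine: expanding $e^x - 1 \sim x$ for the small quantity $x = c(H)^2 \log n / n$ with $c(H) = (1-2H)\,\mathrm{Si}(\pi)/\pi$, and combining with the expression for $h_e(n) - h(\raisedchi)$ in equation~(\ref{eqn:excess_ent_convergence}), I obtain
\[
h(X_n \mid X_{n-1},\ldots,X_1) - h(\raisedchi) = O\!\left(\frac{(1-2H)^2 \log n}{n}\right) = O\!\left(\frac{\log\!\bigl(n^{(1-2H)^2}\bigr)}{n}\right),
\]
which is the claimed rate. The factor $(1-2H)^2$ recovers the $O(n^{-1})$ rate at $H=\tfrac{1}{2}$ (white noise) and worsens monotonically as $H \to 0$ or $H \to 1$, consistent with the intuition that LRD processes accumulate information more slowly.

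The main obstacle is the step in which the divergent infinite sum is legitimately replaced by the partial sum up to the matrix size $n$. The Strong Szeg\H{o} Theorem as stated in the excerpt requires $\sum k b_k^2 < \infty$, which fails for LRD spectra, so one cannot simply plug in. A rigorous justification requires either invoking the Fisher--Hartwig asymptotic for Toeplitz determinants with integrable power-law singularities at the origin, or a direct tail-truncation argument showing that the $b_k$ with $k > n$ contribute only lower-order terms to $\log |K^{(n)}|$. Verifying that the subleading factor in $|K^{(n)}|$ is genuinely of size $n^{c(H)^2}$ (and not, say, polylogarithmic with a different exponent) is where essentially all of the work of the theorem lives; once that quantitative Szeg\H{o}-type statement is in hand, the remainder of the proof is just the elementary expansion of the exponential described above.
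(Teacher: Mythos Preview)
Your approach is essentially identical to the paper's: both start from the Strong Szeg\H{o} framework, replace the divergent sum $\sum_{k=1}^{\infty} k b_k^2$ by its truncation at the matrix size, invoke the partial-sum asymptotic $\sum_{k=1}^{n} k b_k^2 \sim \frac{(1-2H)^2\,\mathrm{Si}(\pi)^2}{\pi^2}\log n$ from the proof of Theorem~\ref{mutual_infinite}, and read off the $O\bigl((1-2H)^2\log n / n\bigr)$ rate. The paper handles your ``main obstacle'' in exactly the heuristic way you anticipate---it writes the exponent as a double limit $\lim_{n}\lim_{m}$ and then sets $m=n$ without further justification---so your honest flagging of the Fisher--Hartwig gap is, if anything, more careful than the published argument.
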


\begin{proof}
	Similar to the theorem above, we consider the convergence of the term, $\exp(\frac{1}{n} \sum_{k=1}^\infty k b_k^2)$, to 0, and use the following expansion of the term,
	\begin{align*}
		\lim\limits_{n \rightarrow \infty} \exp\left(\frac{1}{n} \sum_{k=1}^\infty k b_k^2\right) &= \lim\limits_{n \rightarrow \infty} \exp\left(\frac{1}{n} \lim\limits_{m \rightarrow \infty}\sum_{k=1}^m k b_k^2\right),\\
		&= \lim\limits_{n \rightarrow \infty} \lim\limits_{m \rightarrow \infty} \exp\left(\frac{1}{n} \sum_{k=1}^m k b_k^2\right).
	\end{align*}
	Where we use the continuity in the exponential function to exchange the limit in the function.
	From Theorem~\ref{mutual_infinite}, we showed that the rate of divergence of the partial sum of $kb_k^2$ is equal to,
	\begin{align*}
	\sum_{k=1}^{m} k b_k^2 &\sim \frac{\left(1-2H\right)^2\ Si\left(\pi\right)^2}{\pi^2} \log m,\\
	&= \frac{Si\left(\pi\right)^2}{\pi^2} \log \left(m^{\left(1-2H\right)^2}\right).
	\end{align*}
	Hence we can consider the limits as $n$ and $m$ tend to infinity,
	\begin{align*}
		\lim\limits_{n \rightarrow \infty} \lim\limits_{m \rightarrow \infty} \exp\left(\frac{1}{n} \sum_{k=1}^m k b_k^2\right) &\sim \lim\limits_{n \rightarrow \infty} \lim\limits_{m \rightarrow \infty} \exp\left(\frac{1}{n} \frac{Si\left(\pi\right)^2}{\pi^2} \log \left(m^{\left(1-2H\right)^2}\right) \right).
	\end{align*}
	If we take the $n \rightarrow \infty$ and $m \rightarrow \infty$ such that $m=n$, then we have 
	\begin{align*}
		\lim\limits_{n \rightarrow \infty} \exp\left(\frac{1}{n} \frac{Si\left(\pi\right)^2}{\pi^2} \log \left(n^{\left(1-2H\right)^2}\right) \right) \rightarrow 1.
	\end{align*}
	Hence,
	\begin{align*}
		\lim\limits_{n \rightarrow \infty} \frac{\log\left(n^{\left(1-2H\right)^2}\right)}{n} \rightarrow 0.
	\end{align*}	
	Which gives the rate of convergence of the conditional entropy to entropy rate.
\end{proof}

\noindent Therefore, the convergence of the conditional entropy to entropy rate is slower for LRD processes than SRD processes. This provides an information theoretic characterisation of LRD by convergence properties, similar to the covariance function, the sample mean and the parameters of linear predictors on the infinite past. The rate of convergence to the entropy rate decreases rapidly as $H \rightarrow 1$, because in $\left(1-2H\right)^2$ the influence of the Hurst parameter is squared. The convergence becomes much quicker as the Hurst parameter approaches 0.5, and eventually reaches the point where it converges immediately at $H=0.5$, as $h(X_i) = h(\raisedchi), \forall i \in \mathbb{Z^+}$ in the absence of any correlations.


\section{Conclusion}

In this paper, we are concerned with the behaviour of the differential entropy rate to understand and characterise the behaviour of LRD and SRD processes. Analysing two common LRD processes, FGN and ARFIMA(0,d,0), we have shown that the maximum occurs in the absence of correlations, \emph{i.e.}, $H=0.5$, and the differential entropy rate tends to the minimum, $-\infty$ as the strength of positive correlations increase, \emph{i.e.}, as we receive more information from correlations, the entropy of the process decreases. However, there is very different behaviour for negatively correlated processes, where ARFIMA(0,d,0) processes do not tend to $-\infty$ as the strength of the negative correlations increases. Further research is required to understand this behaviour for these processes.
 
In addition, we have made a link, similar to Shannon entropy, between the mutual information between past and future and excess entropy, meaning that the amount of shared information between the complete past of future of a process is the same as the additional information that accrues when converging to the entropy rate, based on past observations. This leads to a characterisation of LRD processes, as those having infinite mutual information between past and future. Using this and Szeg{\"o}'s limit theorems we then can classify LRD and SRD processes by their convergence rates, and show that LRD processes have slower convergence of the conditional entropy, conditioned on the past, to the entropy rate.

\section*{Acknowledgment}

This research was funded by CSIRO's Data61, the Australian Research Council's Centre of Excellence in Mathematical and Statistical Frontiers and Defence Science and Technology Group. Thanks to Adam Hamilton for his assistance in providing feedback on this manuscript.

\def\href#1#2{#2}
\bibliography{myBibliography} 



\end{document}